\pgfplotsset{compat=1.15}
\algrenewcommand\alglinenumber[1]{\scriptsize #1:}
\newcommand{\system}{{SEPAR}\xspace}
\newcommand{\ZERO}{\textsf{PREPARE}\xspace}
\newcommand{\zero}{{\sf \small prepare}\xspace}
\newcommand{\ONE}{\textsf{PROPOSE}\xspace}
\newcommand{\one}{{\sf \small propose}\xspace}
\newcommand{\TWO}{\textsf{ACCEPT}\xspace}
\newcommand{\two}{{\sf \small accept}\xspace}
\newcommand{\THREE}{\textsf{COMMIT}\xspace}
\newcommand{\three}{{\sf \small commit}\xspace}
\newcommand{\vchange}{{\sf \small failure-query}\xspace}
\newcommand{\newv}{{\sf \small new-primary}\xspace}
\newcommand{\TWOQ}{\textsf{ACCEPT-QUERY}\xspace}
\newcommand{\twoq}{{\sf \small accept-query}\xspace}
\newcommand{\twof}{{\sf \small super-accept}\xspace}
\newcommand{\sub}{{\sf \small submission}\xspace}
\newcommand{\Sub}{{\sf \small Submission}\xspace}
\newcommand{\clm}{{\sf \small claim}\xspace}
\newcommand{\clms}{{\sf \small claims}\xspace}
\newcommand{\ver}{{\sf \small verification}\xspace}
\newtheorem{theorem}{Theorem}
\newtheorem{lemma}{Lemma}
\newtheorem{metalemma}{Lemma}[section]
\newtheorem{Definition}[metalemma]{Definition}
\newenvironment{proof}{\noindent{\bf Proof:}\rm}
\newif\ifextend
\newif\ifnextend
\title{SEPAR: Towards Regulating Future of Work Multi-Platform Crowdworking Environments with Privacy Guarantees}
 \author{
 Mohammad Javad Amiri$^1$ \quad Joris Dugueperoux$^2$ \quad Tristan Allard$^2$ \\
 {\bf Divyakant Agrawal$^3$ \quad Amr El Abbadi$^3$}\\
$^1$Department of Computer and Information Science, University of Pennsylvania\\
$^2$Univ Rennes, CNRS, IRISA\\
$^3$Department of Computer Science, University of California Santa Barbara\\
$^1$mjamiri@seas.upenn.edu, $^2$\{joris.dugueperoux, tristan.allard\}@irisa.fr, $^3$\{agrawal, amr\}@cs.ucsb.edu
\vspace{2em}
}
\begin{document}

\maketitle

\begin{abstract}
Crowdworking platforms provide the opportunity for diverse workers
to execute tasks for different requesters.
The popularity of the "gig" economy has given rise to independent platforms 
that provide competing and complementary services. 
Workers as well as requesters with specific tasks
may need to work for or avail from the services of multiple platforms 
resulting in the rise of {\em multi-platform crowdworking systems}.
Recently, there has been increasing interest by governmental, legal and social institutions
to enforce regulations,
such as minimal and maximal work hours,
on crowdworking platforms. 
Platforms within multi-platform crowdworking systems, therefore,
need to collaborate to enforce cross-platform regulations.
While collaborating to enforce global regulations
requires the {\em transparent} sharing of information about tasks and their participants,
the {\em privacy} of all participants needs to be preserved.
In this paper, we propose an overall vision exploring the
regulation, privacy, and architecture dimensions for the future of work multi-platform crowdworking environments.
We then present {\em \system}, a multi-platform crowdworking system that
enforces a large sub-space of practical global regulations on
a set of distributed independent platforms in a privacy-preserving manner.
\system, enforces {\em privacy} using {\em lightweight and anonymous tokens},
while {\em transparency} is achieved using fault-tolerant {\em blockchains} shared across multiple platforms.
The privacy guarantees of \system against covert adversaries are formalized and thoroughly demonstrated, 
while the experiments reveal the efficiency of \system in terms of performance and scalability.
\end{abstract}
\section{Introduction}\label{sec:intro}

The rise of the ''gig'' or \emph{platform economy}~\cite{cohen2017law,kenney2016rise}
is reshaping work all around the world. 
Crowdsourcing platforms dedicated to work
(also called \emph{crowdworking platforms}~\cite{ilocrowdwork18})
are online intermediaries between \emph{requesters} and \emph{workers},
where requesters propose \emph{tasks} while \emph{workers} propose skills and time.
By providing requesters (resp. workers) 24/7 access to a
worldwide workforce (resp. worldwide task market), crowdworking
platforms have grown in numbers, diversity,
and adoption\footnote{See for example:
Amazon Mechanical Turk (\url{https://www.mturk.com/}),
Wirk (\url{https://www.wirk.io/}), or
Appen (\url{https://appen.com/}) for micro-tasks,
Uber (\url{https://www.uber.com/}) or
Lyft (\url{https://www.lyft.com/}) for rides,
TaskRabbit (\url{https://www.taskrabbit.com/}) for home maintenance,
Kicklox (\url{https://www.kicklox.com/}) for collaborative engineering.}.
Today, crowdworkers come from countries
spread all over the world, and work on several, possibly competing,
platforms~\cite{ilocrowdwork18}. The use of crowdworking platforms is
expected to continue growing~\cite{ilofow19}, and in fact, they are
envisioned as key technological components of the future of work~\cite{fow, DBLP:journals/debu/BourhisDEHR19, DBLP:journals/debu/Gross-AmblardMT19}.

Crowdworking platforms, however, challenge national boundaries and weaken the
formal relationships between the platforms, workers and task requesters. 
Guaranteeing the compliance of crowdworking platforms with
national or regional labor laws is hard\footnote{See,
e.g., the \texttt{Otey V Crowdflower} class action against a famous microtask platform
for \emph{"substandard wages and oppressive working hours"}
(\url{casetext.com/case/otey-v-crowdflower-1}).}~\cite{ilofow19}
despite the stringent need for regulating work.
For example,
the preamble of the 1919 constitution of the International Labor
Organization~\cite{ilocons1919}, written in the ruins of World War I,
states that: \emph{``Whereas universal and lasting peace can be established
only if it is based upon social justice; (\ldots) an improvement of
those conditions is urgently required; as, for example, by the
regulation of the hours of work, including the establishment of a
maximum working day and week, the regulation of the labor supply
(\ldots)''.}
The total work hours of a worker per week may not exceed $40$ hours to
follow \emph{Fair Labor Standards Act}\footnote{\url{https://www.dol.gov/agencies/whd/flsa}} (FLSA).
In California, Assembly Bill 5 ({\em AB5})\footnote{
\url{https://leginfo.legislature.ca.gov/faces/billTextClient.xhtml?bill_id=201920200AB5}}
entitles workers to greater labor protections, such as minimum wage laws,
sick leave, and unemployment and workers' compensation benefits.
{\em AB5} is currently being challenged by {\em California Proposition 22}\footnote{ \url{https://ballotpedia.org/California_Proposition_22,_App-Based_Drivers_as_Contractors_and_Labor_Policies_Initiative_(2020)}},
which also imposes its own set of regulations on minimal hours worked for health benefits. 
The global regulation of the work hours represents the {\em minimal} and {\em maximal}
number of hours that participants, i.e., worker, requester, and platform,
can spend on crowdworking platforms.
While legal tools are currently being investigated~\cite{ilofow19,cwcnnum20},
there is a stringent need for technical tools allowing official institutions to enforce regulations.

Some platforms have already started implementing self-defined \emph{local regulations}.
For example, Uber\footnote{ \url{https://www.uber.com/en-ZA/blog/driving-hours-limit/}} and
Lyft\footnote{\url{https://help.lyft.com/hc/articles/115012926787-Taking-breaks-and-time-limits}}
force drivers to rest at least 6 hours for every 12 hours in driver mode, or
Wirk.io\footnote{\url{https://www.wirk.io/50k-freelances-en-france/}}
prevent micro-workers from earning more than $3000$ euros per year.
However, since workers and requesters can simply switch platforms when a local limit is reached,
no global cross-platform regulation can be enforced.
Moreover, participants in a crowdworking task
may also behave maliciously or act as adversaries,
e.g., violate the privacy of participants or the regulations for their benefits. %
The privacy of participants and the global consistency of regulations are considered as critical needs for future of work crowdworking environments~\cite{DBLP:journals/debu/BourhisDEHR19}.

Most current crowdworking platforms are independent of each other.
However, the emergence of complex tasks that may need multiple contributions
from possibly different platforms,
on one hand, and more importantly, the enforcement of
legal regulations, on the other hand, highlights the need for
collaboration between crowdworking platforms,
thus resulting in {\em multi-platform crowdworking systems}.
For example, many drivers work for both Uber and Lyft concurrently\footnote{ For example,
\texttt{ridesharapps.com}
provides tutorials to help drivers manage apps to optimize their
earnings~\url{https://rideshareapps.com/drive-for-uber-and-lyft-at-the-same-time/}.},
while a requester may also request multiple rides from both Uber and
Lyft concurrently.
The observation holds also for
on-demand services\footnote{See, e.g., \url{https://tinyurl.com/nytgigmult}.},
as well as microtask platforms~\cite{ilocrowdwork18}
where a requester who has registered on \emph{Amazon Mechanical Turk} and \emph{Prolific}
might need hundreds of contributions for a single microtask at the same time and
accept these contributions from workers
regardless of the platforms the microtasks are performed on. 
Since workers from different platforms might want to perform these contributions,
the system needs to establish consensus among the various microtask platforms
to assign workers and provide the specified number of contributions while
ensures minimal and maximum hourly regulations on workers
without revealing any private information about the workers to competing platforms.

Our overall vision for multi-platform crowdworking environments needs to address three main dimensions:
{\em regulations}, {\em privacy}, and {\em architecture}.
First, a multi-platform crowdworking system must clearly define the \emph{types of regulations} supported
in terms of the \emph{complexity} of the regulation (e.g., a simple or a chain of interactions) as well as
the \emph{aggregate} requirements of the regulation (e.g., no aggregation or \texttt{SUM} aggregations). 
For example, California Proposition 22,
which states "if a driver works at least $25$ hours per week, companies (i.e., platforms)
require to provide healthcare subsidies \ldots",
is a simple, \texttt{SUM}-aggregate regulation.
Second, the threat model (e.g., \emph{honest-but-curious}, \emph{covert}, \emph{fully malicious})
as well as the 
privacy guarantees provided to each entity must be clearly specified.
Finally, from an architecture design point of view,
any multi-platform crowdworking environment consists of two critical components:
{\em regulation management} that models and enforces regulations and
{\em global state management} that manages the global states of all participants as well as tasks.
Each of these components can be implemented in
a centralized or decentralized approach.

In this paper, we present our overall vision for future of work multi-platform crowdworking environments together with {\em \system},
a possible instance of a precise point in the design space of multi-platform crowdworking systems.
\system results from choices, guided by cutting-edge real-life regulation proposals,
on all three regulations, privacy and architecture dimensions of the design space.
First, \system focuses on managing lower and upper bounds on aggregate-based regulations.
Second, \system considers that any participant in a crowdworking environment
may act as a \emph{covert adversary}~\cite{aumann2007security} and
ensures that no participant obtains or infers any information
about a crowdworking task beyond what is strictly needed for accomplishing
its local crowdworking task and for the distributed enforcement of regulations.
Finally, in \system, we opted for simplicity and rapid prototyping to use
a centralized (but fault-tolerant) component to manage regulations, however,
a decentralized component to manage the global state of the system.
In particular, \system uses a {\em permissioned blockchain}
as an underlying infrastructure that is shared among all involved platforms.

The complexity of the conjunction of the required properties
makes the problem non-trivial.
First, the regulations need to be expressed in a {\em simple} and {\em non-ambiguous} manner.
Second, while enforcing regulations over multiple crowdworking platforms
requires the global state of the system to be {\em transparent},
the {\em privacy} of participants needs to be preserved,
hence \system needs to reconcile {\em transparency} with {\em privacy}.
Finally, the decentralized management of the global state among a distributed set of crowdworking platforms
requires distributed consensus protocols.

\system is a two-level solution consisting of a \emph{privacy-preserving token-based system}
(i.e., the application level)
on top of a \emph{blockchain ledger}
shared across platforms (i.e., the infrastructure level).
First, at the application level, global regulations are modeled
using {\em lightweight and anonymous tokens} distributed to workers,
platforms, and requesters. 
The information shared among platforms and participants is limited to the minimum
necessary for performing the tasks against
\emph{adversarial participants acting as covert adversaries}.
Second, at the infrastructure level, the blockchain ledger
allows \system to provide transparency across platforms.
Nonetheless, for the sake of privacy and to improve performance,
the ledger is {\em not maintained} by any platform and
each platform maintains only a view of the ledger.
We then design a suite of consensus protocols for coping with the concurrency issues
inherent to a multi-platform context.
Salient features of \system include the simplicity of its building
blocks (e.g., usual signature schemes) and its compatibility
with today's platforms (e.g., it does not jeopardize their privacy
requirements of requesters and workers for enforcing the regulation).

In a nutshell, the contributions of this paper are as follows:
\begin{enumerate}
\item A vision for the design space of regulation systems for future of work multi-platform crowdworking environments. In particular, we (1) express regulations as \texttt{SQL} constraints and categorize them according to their \texttt{SQL} expression, (2) propose a formal privacy model for multi-platform regulated crowdworking systems 
based on the well-known simulatability paradigm, and (3) discuss critical components of their architecture. 
\item \system, a two-level {\em privacy-preserving transparent} multi-platform proof-of-concept crowdworking system that enforces a precise point in the design space guided by cutting edge practical regulations currently discussed by societal organizations, legal entities and enterprises. \system uses a simple language for expressing {\em global regulations} and mapping them to \texttt{SQL} constraints to ensure semantic clarity. It ensures privacy using \emph{lightweight and anonymous tokens}, while transparency is achieved using a \emph{blockchain} shared across platforms for both crash-only and Byzantine nodes.
\item A formal security analysis of \system and thorough experimental evaluations.
\end{enumerate}

The paper is organized as follows.
The overall vision of the design space for regulation systems is presented in Section~\ref{sec:designspace}.
Section~\ref{sec:model} presents the \system model within the design space.
The implementation of regulations in \system is discussed in Section~\ref{sec:enf}.
The blockchain ledger and consensus protocols are presented in Section~\ref{sec:dist}.
Section~\ref{sec:eval} details an experimental evaluation,
Section~\ref{sec:related} discusses the related work, and
Section~\ref{sec:conc} concludes the paper.
\section{Design Space of Regulation Systems}
\label{sec:designspace}

Designing a system for regulating crowdworking platforms mandates three main choices. First, given the large variety of regulations that apply to working environments, any given regulation system must clearly define the \emph{types of regulations} it supports. Second, a crowdworking environment involves a set of distributed entities (platforms, workers, requesters) that cannot be fully trusted. The \emph{privacy guarantees} provided to each entity must be rigorously stated (e.g., computational guarantees). Third, the distributed nature of a crowdworking environment requires various \emph{architectural choices} that range from fully decentralized approaches (e.g., peer-to-peer architectures) to the traditional centralized architecture.
In this section, we first define the crowdworking environment that we consider and then
characterize 
the design space for crowdworking regulation systems. 

\subsection{Crowdworking Environment}

A \emph{crowdworking environment} consists of
a set of workers $\mathcal{W}$ interacting with a set of requesters $\mathcal{R}$
through a set of competing platforms $\mathcal{P}$.
We refer to the workers, platforms, and requesters of a crowdworking environment as \emph{participants}.
Each worker $w \in \mathcal{W}$
(1) registers to one or more platforms $\mathcal{P}_{w} \subset \mathcal{P}$
according to her preferences and, through the latter,
(2) accesses the set of tasks available on $\mathcal{P}_{w}$,
(3) submits each {\em contribution} to the platform $p \in \mathcal{P}_{w}$ she elects, and
(4) obtains a {\em reward} for her work.
On the other hand, each requester $r \in \mathcal{R}$ similarly
(1) registers to one or more platforms $\mathcal{P}_{r} \subset \mathcal{P}$,
(2) issues a {\em submission} which contains her tasks $\mathcal{T}_{r}$
to one or more platforms $p \in \mathcal{P}_{r}$,
(3) receives the contributions of each worker $w$ registered to
$\mathcal{P}_{r} \cap \mathcal{P}_{w}$ having elected a task $t \in \mathcal{T}_{r}$, and
(4) launches the distribution of rewards.
Platforms are thus in charge of facilitating the interactions between workers and requesters.
A \emph{crowdworking process} $\pi$ connects three participants
-- a workers $w$, a platform $p$, and a requester $r$ --
and aims to facilitate the execution of a task $t \in \mathcal{T}_{r}$ through platform $p$ via a worker $w$.
For simplicity and without loss of generality, we assume that each process corresponds
to a time unit of work (e.g., $1$ hour). 
Figure~\ref{fig:arch} shows a crowdworking infrastructure with
four platforms, four workers and four requesters.

\begin{figure}[t]
\begin{center}
\includegraphics[width= 0.4\linewidth]{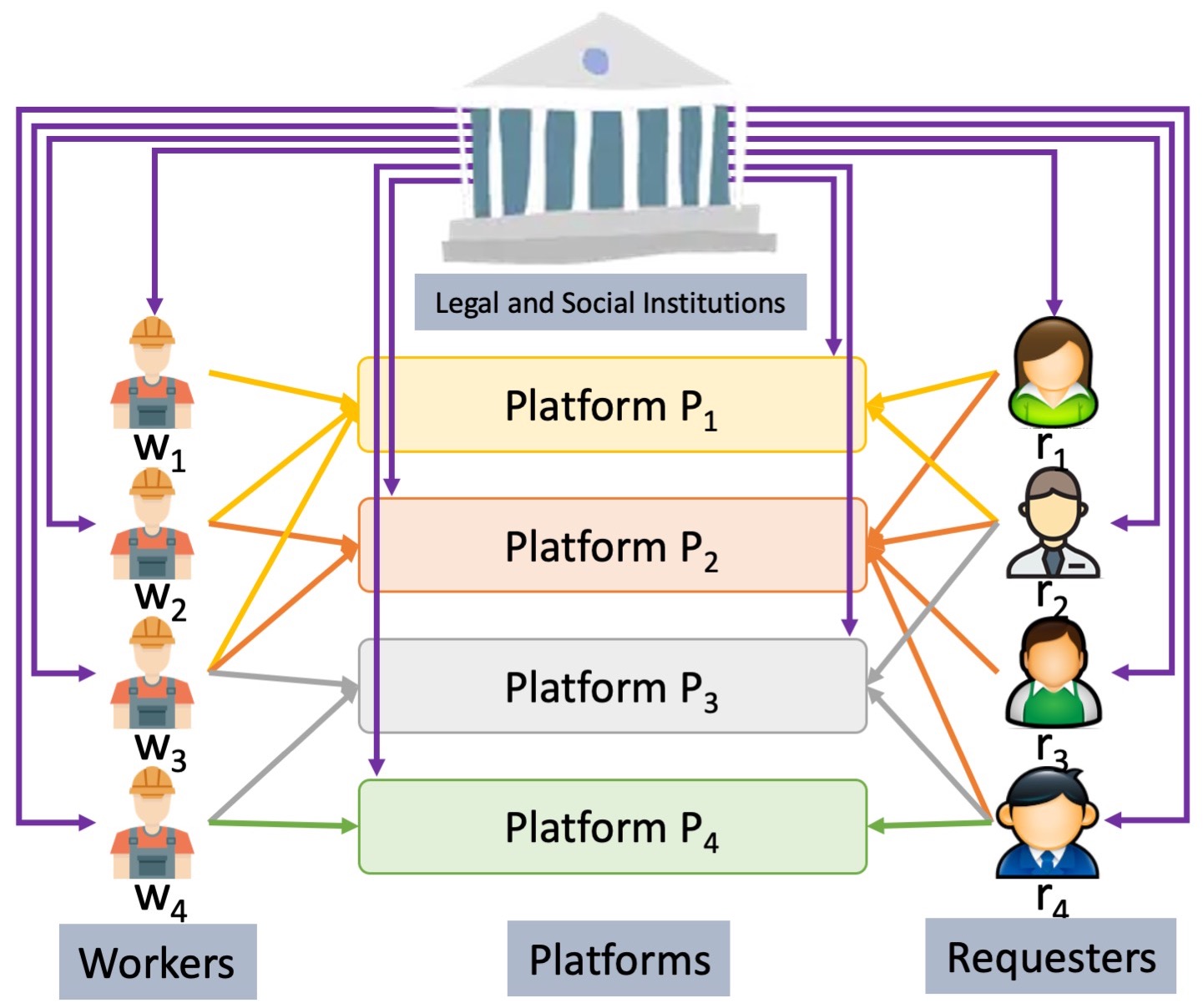}
\caption{A crowdworking infrastructure}
\label{fig:arch}
\end{center}
\end{figure}

\subsection{Types of Regulation}
\label{sec:expcert}

The space of possible regulations can be structured based on two orthogonal dimensions:
the \emph{complexity} of the regulation, e.g.,
constraints on a single process versus constraints that apply to multiple processes that are transitively related,
and the \emph{aggregate} nature of the regulation, e.g., no aggregation versus restrictions involving \texttt{SUM} aggregate.

In order to give a clear semantics to each dimension and to rigorously map regulations to points in this space, we express regulations by \texttt{SQL} constraints over a \emph{universal virtual table} storing information about all crowdworking processes having been performed. We denote this table by \texttt{U-TABLE} and emphasize that it is \emph{virtual} (we use it only for clarifying the various types of regulations, it is never instantiated). The attributes of \texttt{U-TABLE} refer to meta-data about the interactions (i.e., at least the worker, requester, and platform involved in a process, and also possibly additional metadata such as begin and end timestamps) and information about the contents (e.g., the time estimate for the task\footnote{Future regulation systems will need to design technical means to guarantee the reliability of the time estimates for tasks (e.g., privacy-preserving feedback systems from workers, automatic time estimation by analyzing task descriptions).} - the \texttt{TIMECOST} attribute below -, the proposed wage, the worker's contribution). For simplicity, we focus below on the attributes of the \texttt{U-TABLE} relevant to our illustrative examples: (1) \texttt{TS\_BEGIN}, \texttt{TS\_END}, \texttt{WORKER}, \texttt{PLATFORM}, and \texttt{REQUESTER}, and (2) \texttt{TIMECOST}, \texttt{WAGE}, 
and \texttt{CONTRIBUTION}. 
Finally, a regulation is simply a Boolean \texttt{SQL} expression nested within the usual \texttt{CHECK} clause: \verb|ALTER TABLE U ADD CONSTRAINT r CHECK ( ... )|. 

The complexity and aggregate dimensions of a regulation can both be deduced from its \texttt{SQL} expression. The complexity is given by the presence of {\em join} operations while the aggregate dimension is given by the presence of {\em aggregate} function(s), possibly with \texttt{GROUP BY} and \texttt{HAVING} clause(s). For simplicity, we consider a coarse grain characterization of these two dimensions. A regulation is \texttt{simple} if there is no join and \texttt{complex} otherwise. A regulation is \texttt{row-only} if it does not involve any aggregate function, \texttt{aggregate-only} if it involves only comparison(s) over aggregate(s), and \texttt{mixed} if it involves comparisons over rows and aggregates.   

We illustrate the possible types of regulations based on simple examples extracted from real-life crowdworking regulations or from real-life proposals of regulation. First, we consider a regulation $r_1$ requiring the wage proposed by each task to be at least a given amount $\theta$. This regulation is similar to CA Proposition 22. It illustrates the (\texttt{simple}, \texttt{row-only}) type of regulation. 
\begin{alltt}
{\small
ALTER TABLE U-TABLE ADD CONSTRAINT \(r1\) CHECK (
  NOT EXISTS ( 
    \textbf{SELECT} * \textbf{FROM} U
    \textbf{WHERE} TIMECOST \(\leq \theta\)
  ) );
}
\end{alltt}

Second, we consider a regulation $r_2$ requiring each worker to work at most a given amount of time units $\theta$ per time period $\rho$. It illustrates the (\texttt{simple}, \texttt{mixed} with \texttt{SUM}-aggregate) type of regulation. It is similar to the regulation followed by the \texttt{wirk.io} platform that limits the crowdworking gains of any worker on the platform to 3000 euros per year. The following \texttt{SQL} constraint expresses $r_2$, assuming that \texttt{current\_time()} gives the current time in the same unit as the period $\rho$.
\begin{alltt}
{\small
ALTER TABLE U-TABLE ADD CONSTRAINT \(r2\) CHECK (
  NOT EXISTS ( 
    \textbf{SELECT} * \textbf{FROM} U
    \textbf{WHERE} WORKER=\(w\) AND \texttt{current_time()}-\texttt{TS\_BEGIN} \(\leq \rho\)
    \textbf{GROUP BY} WORKER
    \textbf{HAVING} \texttt{SUM}(TIMECOST) \(\geq \theta\)
  ) ); 
}
\end{alltt}

Finally, we complete our illustrations by considering a regulation $r_3$ that prevents any worker to submit two similar contributions to the same requester (even through two distinct platforms). We assume that the \texttt{sim} function computes the similarity between two contributions and that $\theta$ is the threshold above which we consider that two contributions are too similar. This illustrates the (\texttt{complex}, \texttt{row-only}) type of regulation. 
\begin{alltt}
{\small
ALTER TABLE U-TABLE ADD CONSTRAINT \(r3\) CHECK (
  NOT EXISTS ( 
    \textbf{SELECT} * 
    \textbf{FROM} U U1 JOIN U U2 ON 
        U1.WORKER=U2.WORKER 
        AND U1.REQUESTER=U2.REQUESTER
        AND \texttt{sim}(U1.CONTRIBUTION, U2.CONTRIBUTION) \(\geq \theta\)
  ) );
}
\end{alltt}

Although most regulations must always hold (e.g., a lower than constraint on a (\texttt{simple}, \texttt{SUM}-aggregate) regulation, a (\texttt{complex}, \texttt{row-only}) regulation), some regulations, inherently, \emph{cannot} always hold. Similar to \texttt{deferred} \texttt{SQL} constraints, they must only hold after a given time period. For example, a periodic greater than constraint on a (\texttt{simple}, \texttt{SUM}-aggregate) regulation cannot hold initially, but must hold at the end of a given period (e.g., CA Proposition 22 that requires a worker to work at least 25 hours per week to qualify for healthcare subsidies). We call \emph{enforceable} the regulations that must always hold and \emph{verifiable} the regulations that eventually hold. The verifiable/enforceable property of a regulation is only due to its nature not to its implementation (but it impacts it directly). Future crowdworking regulation systems need to determine the enforceable/verifiable properties of the regulations
they support.

\subsection{Threat Model and Privacy Model}
Crowdworking environments are open environments that connect possibly adversarial participants. Any regulation system must clearly specify both the threat model (e.g., \emph{honest-but-curious}, \emph{covert}, \emph{fully malicious}) and the privacy model. %
While the threat model only depends on the underlying system, the privacy model can be based on a common formal requirement parameterized for each system by the leaks it tolerates. The possible leakage ranges from no information leaked to any participant (similar to usual \emph{secure multi-party computation} algorithms) to full disclosure (of all the information discussed above) to all participants (e.g., in current crowdworking platforms, the underlying system often requires full disclosure to the platform). The disclosures tolerated by future regulation systems will fall within this range. 

We formalize below a common privacy model based on the well-known simulatability paradigm often used by secure multi-party computation algorithms. The proposed model guarantees that nothing leaks, with computational guarantees\footnote{The majority of data protection techniques used in real-life are based on encryption schemes that provide guarantees against computationally-bounded adversaries but the model can be easily adapted to information theoretic attackers.}, except the \emph{pluggable} system-dependent tolerated \emph{disclosures}. 

Consider a crowdworking process $\pi$ between worker $w$, platform $p$, and requester $r$ for solving a task $t$. The task may have been sent to several platforms and, depending on the underlying crowdworking platforms, might have been accessed by several workers before being picked and solved. The information generated by the execution of $\pi$ consists, similarly to the information captured by the \texttt{U-TABLE}, of information about interactions (e.g., at least $w$, $r$, and $p$, the participants directly involved in $\pi$) and information about contents (e.g., description of $t$, contribution, proposed wage). We propose to define the sets of disclosure according to the involvement of a participant in $\pi$. Indeed, the platforms not involved in $\pi$ (i.e., different from $p$) but that received $t$ may need to learn that $t$ has been completed (e.g., to manage their local copy of the task).  %
Similarly, workers who are not involved in a task $t$ might still need to know that it has been executed, while potentially preserving the privacy of worker $w$ who executed the task.
The three sets of disclosures defined below cover all these cases\footnote{This can be extended by tuning the disclosure sets (e.g., by distinguishing the requesters from the platforms in the set of involved participants).}. Future regulation systems have to specify clearly the content of each disclosure set.
\begin{itemize}
  
\item Disclosures to the participants that are \textbf{not involved} in $\pi$ and that have \textbf{not received} task $t$ from requester $r$: $\delta_{\neg R \neg I}^\pi$ 
    
\item Disclosures to the platforms and workers that have \textbf{received} the task $t$ from $r$ but that are \textbf{not involved} in $\pi$: $\delta_{R \neg I}^\pi$ 
    
\item Disclosures to the participants that are directly \textbf{involved} in $\pi$ (and have thus \textbf{received} task $t$): $\delta_{RI}^\pi$ 
\end{itemize}

\begin{Definition}
  Let $\Pi$ be a set of crowdworking processes executed by $\varsigma$
  an instance of \system over a set of participants. We say that
  $\varsigma$ is \emph{$\delta^\Pi$-Private} if, for all
  $\pi \in \Pi$, for all computationally-bounded adversaries
  $\mathtt{A}$, the sets of disclosures
  $(\delta_{\neg R \neg I}^\pi, \delta_{R \neg I}^\pi, \delta_{RI}^\pi)$, 
  assuming arbitrary background knowledge $\chi \in \{0, 1\}^*$,
  the distribution representing the adversarial
  knowledge over the input dataset in the real setting is
  \emph{computationally indistinguishable} from the distribution
  representing the adversarial knowledge in an ideal setting in which
  a trusted third party \texttt{cp} executes the crowdworking
  process $\pi$ of $\varsigma$:
  \ifextend
\begin{displaymath}
  \mathtt{REAL}_{\varsigma, \mathtt{A} (\chi, \delta_i^{\pi})}
  (\mathcal{W}, \mathcal{P}, \mathcal{R}, \mathcal{T}) 
  \stackrel{c}{\equiv} 
  \mathtt{IDEAL}_{\mathtt{cp}, \mathtt{A}
    (\chi, \delta_i^{\pi})} (\mathcal{W}, \mathcal{P}, \mathcal{R},
  \mathcal{T})
\end{displaymath}
\fi
\ifnextend
$\mathtt{REAL}_{\varsigma, \mathtt{A} (\chi, \delta_i^{\pi})}
  (\mathcal{W}, \mathcal{P}, \mathcal{R}, \mathcal{T}) 
  \stackrel{c}{\equiv} 
  \mathtt{IDEAL}_{\mathtt{cp}, \mathtt{A}
    (\chi, \delta_i^{\pi})} (\mathcal{W}, \mathcal{P}, \mathcal{R},
  \mathcal{T})$
\fi
where $i \in \{\neg R \neg I, R\neg I, RI\}$, and $\mathtt{REAL}$ denotes the adversarial
knowledge in the real setting and $\mathtt{IDEAL}$ its counterpart in
the ideal setting.
\end{Definition}

\subsection{Architecture}

The architecture of a multi-platform crowdworking system consists of two main building blocks:
{\em Regulation Management} and {\em Global State Management}.
Regulation management {\em models} the regulations among the participants and
{\em ensures} that the modeled regulations are adhered to by all participants.
Global state management, on the other hand, stores the global states of the system including
all information related to the participants and tasks.
To implement these two components, similar to all distributed systems,
either a centralized or a decentralized approach can be employed.
Centralization is typically easier to rapid prototype,
while requiring additional technologies to ensure fault-tolerance, privacy, and trustworthiness.
A decentralized approach, on the other hand,
is more compatible with the setting consisting of diverse organizational entities,
while resulting in more overhead and complex communication protocols among entities.

The verifiable/enforceable property of a regulation is another architectural challenge.
While enforceable regulations 
could be enforced within the multi-platform system,
verifiable regulations might be of interest to an outside entity,
e.g., legal courts or insurance companies. 
In the latter case, the system needs to
provide evidence to an outside entity demonstrating that
the regulation was adhered to and hence resolve any disputes that may arise.
\section{\system: Design Choices}
\label{sec:model}
We propose \system as a possible instance of a precise point in the design space of regulation systems. \system results from choices, guided by cutting-edge real-life regulation proposals, on the three dimensions of the design space: supported regulations, disclosures tolerated, and  architecture. 

\subsection{Supported Regulations}
\system focuses on enforcing lower and upper bounds on the aggregated working time spent on crowdworking platforms\footnote{ Regulating the wages earned through crowdworking platforms can be dealt with similarly.}, a consensual societal need. The necessary information are the participants to crowdworking processes and the (discrete) time estimation of tasks\footnote{ Extending regulations with validity periods (e.g., "one week"), is straightforward.}: for \system, the \texttt{U-TABLE} thus consists of the \texttt{WORKER}, \texttt{PLATFORM}, \texttt{REQUESTER}, and \texttt{TIMECOST} attributes. \system does not consider joins. As a result the kind of regulations supported by \system is (\texttt{simple}, \texttt{mixed} with \texttt{SUM}-aggregate). 
The enforceable/verifiable nature of the regulations supported by \system are easy to determine:
lower-than regulations are enforceable because the upper-bound guarantee must always hold, and
greater-than regulations are verifiable because the lower-bound guarantee cannot always hold (but will have to hold finally, e.g., at the end of each time period).

For simplicity, we propose to express such regulations by (1) a triple $(w, p, r)$ that associates a worker $w$, a platform $p$, and a requester $r$, (2) a comparison operator $<$ or $>$, and (3) a threshold value $\theta$ (an integer) that defines the lower/upper bound that needs to hold.  Intuitively, a regulation $((w, p, r), <, \theta)$ states %
that there must not be more than $\theta$ time spent by worker $w$ on platform $p$ for requester $r$.
We also allow two wildcards to be written in any position of a triple: $*$ and $\forall$. First, the $*$ wildcard allows to ignore one or more elements of a triple\footnote{ Intuitively, the $*$ wildcard means "whatever".}. For example $(*, p, r)$ means that the regulation applies to the pair $(p, r)$. A triple may contain up to three $*$ wildcards. 
An element of a triple that is not a $*$ wildcard is called a \emph{target} of the regulation. 
Second, the $\forall$ wildcard
allows a regulation to express a constraint that must hold for all participants in the same group of participants\footnote{ Intuitively, the $\forall$ wildcard means "for each".}. For example, $(\forall, p, r)$ represents the following set of triples: $\{(w, p, r)\}$, $\forall w \in \mathcal{W}$. %
As a result, enforceable regulations are expressed by $((w, p, r), <, \theta)$ tuples (possibly with wildcards), and verifiable regulations by $((w, p, r), >, \theta)$ tuples (possibly with wildcards). 

\textbf{Examples.} The semantics of an enforceable regulation without any wildcard, e.g., $e \leftarrow ((w, p, r), <, 26)$ expressing a higher bound on the number of time units spent (i.e., 25 time units) 
by worker $w$ for requester $r$ on platform $p$, is the same as the following \texttt{SQL} query: %
\begin{alltt}
{\small
ALTER TABLE U-TABLE ADD CONSTRAINT \(e\) CHECK (
  NOT EXISTS ( 
    \textbf{SELECT} * \textbf{FROM} U-TABLE
    \textbf{WHERE} WORKER=\(w\) AND PLATFORM=\(p\) AND REQUESTER=\(r\)
    \textbf{GROUP BY} WORKER, PLATFORM, REQUESTER
    \textbf{HAVING} SUM(TIMECOST) \(\geq 26\)
  ) );
}
\end{alltt}

\noindent
The weekly FLSA limit on the total work hours per worker can easily be expressed as $((\forall, *, *), <, 40)$. A social security institution can request each worker $w$ applying for insurance coverage to prove that she worked more than $5$
hours: $((w, *, *), >, 5)$ is both necessary and sufficient. Similarly, the regulation $((*, p, *), >, 1000)$ allows a tax institution to require from each platform $p$ applying for a tax refund that the total work hours of all its workers is at least $1000$ hours.

\subsection{Threat Model and Disclosure Sets}
\system considers that any participant in a crowdworking environment (worker, requester, platform) may act as a \emph{covert adversary}~\cite{aumann2007security} that aims at inferring anything that can be inferred from the execution sequence and that is able to deviate from the protocol if no other participant detects it. For simplicity, we assume in \system that adversarial participants do not collude (although extending \system to cope with colluding covert adversaries is easy (see Section~\ref{sec:enf})).

Consider a crowdworking process $\pi$ between worker $w$, platform $p$, and requester $r$ for solving task $t$. The information generated by the execution of $\pi$ consists of the relationship between the three participants ($w$, $p$, and $r$) with the task $t$. Additionally, we also consider the information that $\pi$ is starting or ending through a starting event \texttt{BEGIN} and an ending event \texttt{END}. They may be determined, for example, by exchanging messages among the participants of $\pi$, and may include additional concrete information, e.g., timestamps, IP address.
$(\mathtt{BEGIN}, \mathtt{END}, w, p, r, t)$ denotes the information generated by $\pi$.
\system does not leak any information about the worker and the requester involved in $\pi$ when it is not needed by $\pi$.
It tolerates the disclosure of the \{\texttt{BEGIN}, \texttt{END}\} events and of the platform $p$ to all participants, whatever their involvement in $\pi$. This allows platforms to share information for enforcing regulations (e.g., check that all participants satisfy the regulations before executing $\pi$), and to collaborate for correctly managing cross-platforms tasks.
Note that for simplicity we use the same notation $\delta$ for disclosures concerning \emph{sets} of crowdworking processes as well.

The resulting disclosure sets of \system are instantiated as follows:
\begin{itemize}

\item The participants that are \textbf{not involved} in $\pi$ and that have \textbf{not received} task $t$ from requester $r$ must not learn anything about the worker, the task, and the requester involved in $\pi$: $\delta_{\neg R \neg I}^\pi=(\mathtt{BEGIN}, \mathtt{END}, p)$

\item The platforms and workers that have \textbf{received} task $t$ from $r$ but that are \textbf{not involved} in $\pi$ must be aware that $t$ has been performed (e.g., for not contributing to  $t$) but must not know that it has been performed by worker $w$: $\delta_{R \neg I}^\pi = (\mathtt{BEGIN}, \texttt{END}, p, r, t)$

\item The participants that are directly \textbf{involved} in $\pi$ (and have thus \textbf{received} task $t$) learn the complete 6-tuple: $\delta_{RI}^\pi = (\mathtt{BEGIN}, \mathtt{END}, w, p, r, t)
$
\end{itemize}

\subsection{Architecture}
\label{sec:archi}

\system has two main components.
The centralized {\em Registration Authority (RA)} and the decentralized {\em Multi-Platform Infrastructure (MPI)}.
Although centralized, the RA can be made fault-tolerant using standard replication \cite{lamport1978implementation} techniques.
RA registers the participants to the crowdworking environment, models the regulations,
and distributes to participants the cryptographic material necessary for enforcing or verifying regulations in a secure manner.

MPI, on the other hand, is a decentralized component that
maintains the global state of the system including all operations performed by the participants.
This state is maintained within a distributed persistent transparent {\em ledger}.
MPI consists of a set of collaborating crowdworking platforms connected by an asynchronous distributed network.
Due to the unique features of blockchains such as
transparency, provenance, and fault tolerance, the MPI is implemented as a {\em permissioned blockchain}.

MPI processes two types of tasks: {\em internal} and {\em cross-platform}.
Internal tasks are submitted to the ledger of a single platform, whereas
cross-platform tasks are submitted to the ledgers of multiple platforms (i.e, involved platforms).
\system does not make any assumptions on the implementation of crowdworking processes by platforms,
e.g., task assignment algorithm and workers contribution delivery.
However, processing a task (either internal or cross-platform) requires agreement from the involved platform(s).
To establish agreement among the nodes within or across platforms,
\system uses {\em local} and {\em cross-platform} consensus protocols. 
Furthermore, in both internal and cross-platform tasks,
\system enables all platforms to check the fulfillment of regulations
using a {\em global} consensus protocol  among {\em all} platforms
(irrespective of whether they are involved in the execution of task).
\section{\system: Implementation of the Regulations}
\label{sec:enf}

Inspired by e-cash systems, \system implements both enforceable and verifiable regulations by managing two \emph{budgets} per participant while guaranteeing both privacy and correctness. The overall system is conceptually simple and only relies on the correct use of indidivual and group signatures.
The registration authority (RA, see Section~\ref{sec:archi}) bootstraps \system, its participants, and refreshes their budgets periodically, but is not involved in the continuous execution of crowdworking processes and their regulations.

\subsection{Individual and Group Signatures}
Workers, requesters, and platforms are all equipped by the registration authority with the following cryptographic material: a pair of {\em individual} public/private asymmetric keys (e.g., RSA) and a pair of public/private asymmetric \emph{group} keys (e.g., \cite{camenisch2004group}) where the union of all workers forms a group (in the sense of group signatures), the union of all requesters forms another group, and the union of all platforms forms the last group. 
A group signature scheme respects three main properties~\cite{chaum1991group}:
(1) only members of the group can sign messages, %
(2) the receiver of the signature can verify  that it is a valid signature for the group but cannot discover which member of the group computed it, and %
(3) in case of dispute later on, the signature can be "opened" (with or without the help of the group members) to reveal the identity of the signer. %
A common way to enforce the third property is to rely on a \emph{group manager} that can add new members to the group or revoke the anonymity of a signature. Instances of such schemes are proposed in~\cite{chaum1991group}, but also in~\cite{ateniese2000practical,camenisch2004group}. In \system, we use the protocol proposed in~\cite{camenisch2004group} and denote
$\sigma^g_w(m)$ 
the group signature of the worker $w$ (with her group private key) for message $m$. We use equivalent notations for requester $r$ and platform $p$ (i.e., respectively $\sigma^g_r(m)$ and $\sigma^g_p(m)$). %
The notation %
$\sigma^i_w(m)$ %
is used to refer to a individual asymmetric signature (e.g., RSA) of worker $w$ (with her individual (non group) private key) for the message $m$. We use equivalent notations for requester $r$ and platform $p$ (i.e., respectively $\sigma^i_r(m)$ and $\sigma^i_p(m)$). 
The registration authority is the group manager for the three groups (workers $\mathcal{W}$, requesters $\mathcal{R}$, platforms $\mathcal{P}$) and is also equipped with her own individual cryptographic keys for signing her messages : $\sigma^i_{RA}(m)$. %

\subsection{A Simple Token-Based System}
\label{sec:const_implem}

To regulate crowdworking processes, our token-based system is defined by five functions:
\texttt{GENERATE} for initializing the budgets with the correct number of tokens and refilling them, %
\texttt{SPEND} for spending portions of the budgets, %
\texttt{PROVE} for providing proofs for verifying verifiable regulations (e.g., to a third party), %
\texttt{CHECK} for checking whether a given spending is allowed or not, and %
\texttt{ALERT} for reporting dubious spending. %
Since the execution of these functions
changes the global state of the system,
the data involved in the execution (e.g., tasks, tokens, signatures)
must be appended to the distributed ledger of the platforms.

\subsubsection{The GENERATE Function.}
The registration authority uses the \texttt{GENERATE} function to create tokens for all participants (i.e., workers, platforms, requesters) according to the set of regulations. We call {\em $e$-tokens} the tokens implementing enforceable regulations and
{\em $v$-tokens} the tokens implementing verifiable regulations\footnote{ Although it is technically possible to use a unified token structure for both enforceable and verifiable regulations, using two different tokens reduces computation and communication costs.}. %

\noindent
\textbf{$e$-tokens.}
For each enforceable regulation $((w, p, r), <, \theta+1)$, the registration authority generates $\theta$ $e$-tokens and sends a copy of each token to each target of the regulation. An $e$-token consists of a \emph{public component} - i.e., a pair made of a \emph{number used only once} (referred to as a {\em nonce} below) generated by the registration authority and a signature of the nonce by the registration authority\footnote{ Extending tokens with labels and timestamps to support validity periods is straightforward.} -
and a \emph{private component} - an index for selecting the correct set of tokens given the other targets, i.e., their public keys\footnote{ The use of a public key generated by the registration authority is important here because (1) it can be shared among participants without disclosing their identities, i.e., it is a pseudonym, (2) the corresponding private key can be used by participants for mutual authentication in order to guarantee the correctness of the index and consequently of the choice of tokens.}.
Let e-$\tau$ be an $e$-token, e-$\tau_{pub}$ be its public component, e-$\tau_{priv}$ be its private component, $N$ be a nonce
and $\lambda$ the list of public keys of the targets of the corresponding regulation. The $e$-token is thus the pair $($e-$\tau_{pub}, $e-$\tau_{priv})$ where e-$\tau_{pub} = (N, \sigma^i_{RA}(N))$ and e-$\tau_{priv} = \lambda$. %

\medskip\noindent
\textbf{$v$-Tokens.}
Similar to an $e$-token, a $v$-token consists of a public and a private component. The public component is a nonce together with its signature from the registration authority. The private component is simply a triplet of signatures, from the registration authority, binding the recipient of the $v$-token (called \emph{owner} below) to each of the other targets of the verifiable regulation\footnote{ Binding recipients to their $v$-tokens allows to trace possible malicious leakages of $v$-tokens (they strongly bind participant to a crowdworking process, contrary to $e$-tokens) and consequently to prevent them (our participants act as covert adversaries)}. More formally, let v-$\tau$ be a $v$-token, v-$\tau_{pub}$ be its public component, v-$\tau_{priv}$ be its private component, $N$ be a nonce, $o$ be the identity of the participant owner of the token, and $(w, p, r)$ be the related triplet. The $v$-token is thus the pair $($v-$\tau_{pub}, $v-$\tau_{priv})$ where v-$\tau_{pub} = (N, \sigma^i_{RA}(N))$ and v-$\tau_{priv} = ( \sigma^i_{RA}(N, o, w ) $, $\sigma^i_{RA}(N,o,p)$, $\sigma^i_{RA}(N,o,r))$. The number of $v$-tokens to produce initially can be easily deduced from the lowest higher bound in enforceable regulations
\footnote{For simplicity, we assume that there is at least one enforceable regulation in the system, and that the smallest threshold for all enforceable regulations is $\theta_{min}$. Then, $\theta_{min}$ is a sufficient upper bound of the time that can be spent by any given triplet of participants. %
It is therefore enough to produce $\theta_{min} \times |\mathcal{W}| \times |\mathcal{P}| \times |\mathcal{R}|$ $v$-tokens. In practice, the number of $v$-tokens produced can be drastically reduced in a straightforward manner by letting participants declare to the registration authority the subset of participants they may work with (\emph{e.g.,} selecting a subset of platforms or a few domains of interest).}

\subsubsection{The SPEND Function.}
\label{sec:spend}
Requesters create and send their tasks to a platform and the platform appends the tasks to either its own ledger (for local tasks) or the ledgers of all involved platforms (for cross-platform tasks). Once the task is published, the workers can indicate their intent to perform the task by sending a \emph{contribution intent} to their platforms. If a contribution is still needed for the task, the crowdworking process $\pi$ starts with the \texttt{SPEND} function, performed as follows. Without loss of generality, we assume below that the given task has a time cost equal to $1$.
First, for each enforceable regulation, the platform requests the public component of one $e$-token from a participant to the crowdworking process, whatever its group, i.e., worker, platform, or requester, that is also a target of the regulation. We call this participant {\em the initiator}.
For a verifiable regulation, the platform is the initiator and consequently does not request any $v$-token to the corresponding worker and requester.
For each regulation (each enforceable regulation and each verifiable regulation), the initiator then chooses one token, not spent yet, and sends it to the platform. Once the platform receives it, it forwards the full token to all the targets and requests two signatures (see below).
The platform includes in its request message the task (or a hash of it), an identifier for the contribution (e.g., a nonce, or when possible a hash of the contribution), and a signature of the two, concatenated, so that
the worker and the requester involved will be able to prove that they were asked for signatures, even if the platform fails.
The platform requires the worker and the requester to send back two signatures: %
(1) a group signature of the public component of the token and
(2) a group signature of the public component of the token concatenated to the task (or to its hash). %
The two signatures per participant will be used later for guaranteeing the correctness of the spending (see the CHECK function below). %
Finally, the platform append to the common datastore the public parts of the tokens ($\tau_{pub}$ for both $e$-tokens and $v$-tokens) and the signatures of the targets (($\sigma^g_w (\tau_{pub} || t), \sigma^g_r (\tau_{pub} || t), \sigma^g_p (\tau_{pub} || t)$) for both $e$-tokens and $v$-tokens).

\subsubsection{The PROVE function.}
\label{sec:prove}

Participants use the \texttt{PROVE} function to provide proofs for guaranteeing verifiable regulations (e.g., to a third party). The use of $v$-tokens is relatively straightforward. During the crowdworking processes, participants simply store the private components of $v$-tokens and deliver them at the end of the validity periods of verifiable regulations. As an example, for a $((w, p, r), >, 4)$ verifiable regulation, the worker $w$ sends the private components of $5$ $v$-tokens\footnote{ For minimal disclosure purposes, participants are allowed to send only the parts of the private components of $v$-tokens that are relevant to the verifiable regulations being verified: verifiable regulations that do not specify all the targets (e.g., $((w, *, *), >, \theta)$ only require the private components involving the worker $w$ (e.g., $\theta+1$ distinct $\sigma^i_{RA} (N , o, w))$ signatures where $o==w$ here).}. The entity in charge of guaranteeing the verifiable regulation checks the signature of the registration authority - to verify that the participant was involved in the task - and the nonce stored in the ledger - to ensure that the token has been indeed spent and committed to the ledgers of all platforms. %

\subsubsection{The CHECK and ALERT Functions.}
\label{sec:check}

These functions are used to detect and report either the malicious behavior of participants resulting in an invalid consumption of tokens or the failure of a platform. The complete set of verifications protects against
(1) the forgery of tokens (verification of the signatures),
(2) the replay of tokens (verification of the absence of double-spending),
(3) the relay of tokens (verification of the absence of usurpation), and
(4) the illegitimate invalidation of tokens (timeout against malicious platform failures).
The first two verifications are straightforward and performed during global consensus. 
Verifications (3) and (4) are similar, and we explain here case (3).
When a token is appended to the ledgers of all platforms, any participant (whether involved in the corresponding crowdworking process or not) can \texttt{CHECK} its nonce.
If a participant detects a nonce that was received from the registration authority but not spent (or spent on a wrong task),
she \texttt{ALERT}s the registration authority. The registration authority will de-anonymize the group signature of the corresponding participant (e.g., the worker's group signature if the alert comes from a worker) and check whether it has been signed by the same participant that sent the token.
Depending on the result, the registration authority will take adequate sanctions against the fraudulent participant (true positive) or the alert-riser (false positive).

\medskip
\noindent
{\bf Platform failure.}
If a platform fails after it requested $e$-tokens or signatures and does not recover (e.g., tokens are not appended to the ledger), the $e$-tokens
revealed to the platform are lost: they cannot be used in any other crowdworking process because they are not anonymous anymore (i.e., the platform knows the association between them and the corresponding participants), and they are not spent either.  In that case, workers or requesters send an \texttt{ALERT} to the registration authority including
(1) the public key of the platform,
(2) the task (or a hash of it), and
(3) all the requests received from the platform.
The registration authority then checks whether the number of requests sent by the platform for the given task matches the corresponding number of matching messages committed in the ledger of all participants.
If there are more requests, the registration authority sets a timeout (e.g., to let unfinished transactions end or the platform recover from a failure).

\begin{figure}[t]
\begin{center}
\includegraphics[width= 0.7\linewidth]{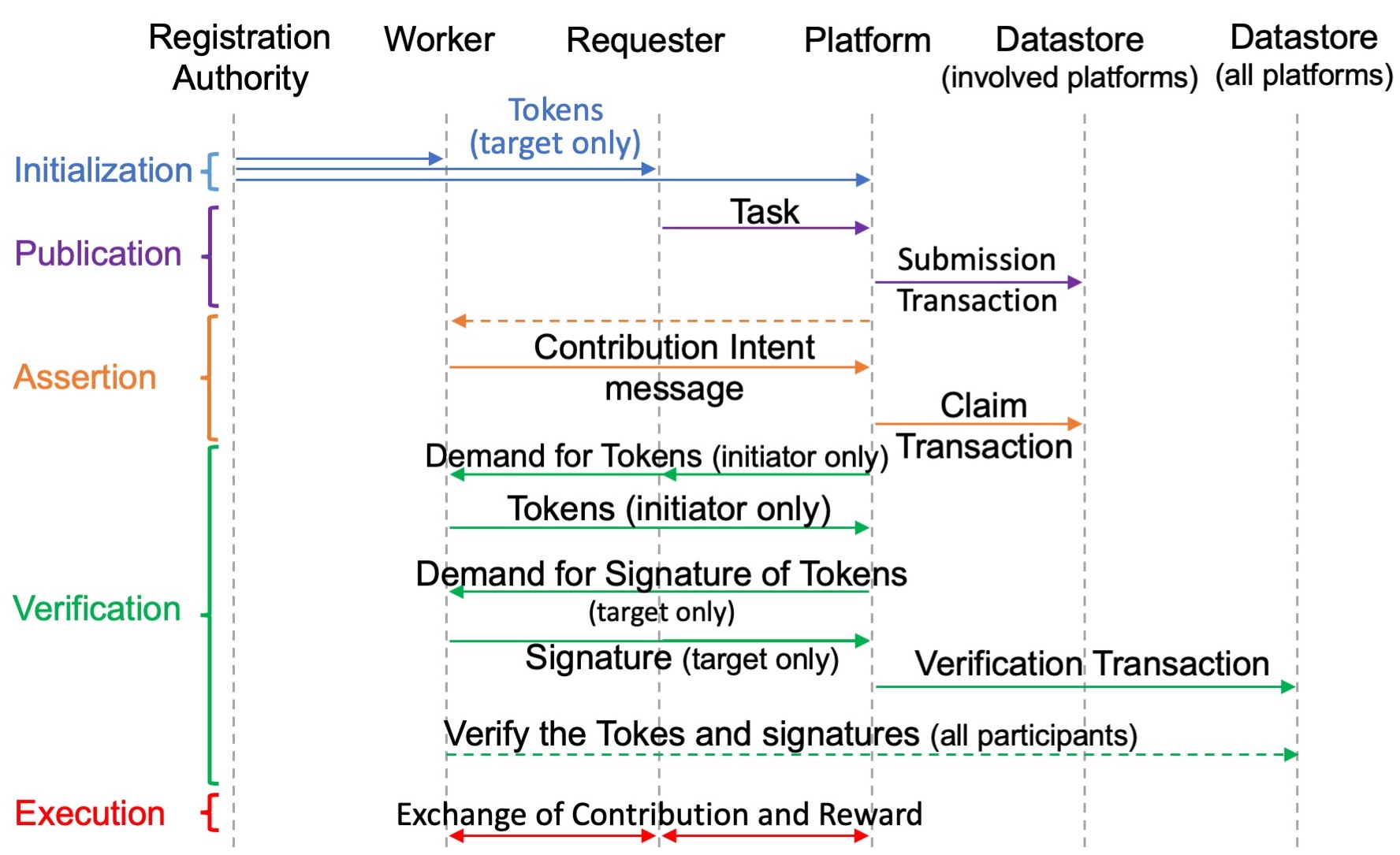}
\caption{Sequence chart {\small (references to targets include all participants for $v$-tokens)}}
\label{fig:constraints}
\end{center}
\end{figure}

\subsection{Task Processing Sequence}\label{sec:global_exec}

The processing of a crowdworking task involves the following five main phases, as depicted in Figure~\ref{fig:constraints}.

\noindent
{\bf Initialization.} The registration authority provides all parties with their keys and tokens. 

\noindent
{\bf Publication.}
Requesters create and send their tasks to platforms.
If a requester wants to publish its task on more than one platform (i.e., a cross-platform task),
the involved platforms collaborate with each other to create a common instance of the task
(e.g., a common task identifier).
The involved platforms then append the task to their ledgers through \sub transactions and
inform their workers in their preferred manner for accessing tasks.

\noindent
{\bf Assertion.}
After a worker has retrieved a task, the worker sends a {\em contribution intent} message
to the platform without revealing the actual contribution.
The platform then updates the number of required contributions for the task and
appends the contribution intent to its ledger through a \clm transaction.
For cross-platform tasks, the platform informs other involved platforms
about the received contribution intent, so that all involved platforms
agree with the number (and order) of the received contribution intents
(i.e., \clm transactions) and append the \clm transaction to their ledgers.
If the desired number of contribution for the task has been achieved, the contribution intent is refused.

Note that while the requester does not choose the workers, it is possible to enforce a selection with {\em a priori} criteria, passed through the platform. Another straightforward enhancement would be to add a communication step, by forwarding the contribution intent, together with the worker's identity to the requester, and letting her approve of it or not. This communication, however, requires the disclosure of the worker's identity to requesters even before a contribution is accepted.

\noindent
{\bf Verification.} Once the contribution intent has been accepted by the platform(s),
the platform asks the corresponding requester and worker to send the required tokens and signatures,
through the \texttt{SPEND} function (see above).
Upon receiving all tokens and signatures, the platform shares them with all platforms and
the $e$-tokens and their signatures are appended to the ledgers of all platforms through \ver transactions.
From this point, anyone can check the validity of requirements
with the \texttt{CHECK} function (and \texttt{ALERT} if required), as developed above.

\noindent
{\bf Execution.} Once all parties have checked the validity of the task, the tokens, and the group signatures, the contribution can be delivered to the requester and the reward to the worker. %

\subsection{Privacy Analysis}

We show below in the suite of Theorem~\ref{the:conf}, Lemma~\ref{lem:integr}, Lemma~\ref{lem:integr_cert}, and Theorem~\ref{the:conf-cov} that the global execution of \system satisfies the \emph{$\delta^\Pi$-privacy} model against covert adversaries (where the disclosure sets are defined in Section~\ref{sec:model}).

First, Theorem~\ref{the:conf} restricts the adversarial behavior to inferences (i.e., similar to a \emph{honest-but-curious} adversary) and shows that the execution of \system satisfies $\delta^\Pi$-privacy (where the disclosure sets are defined in Section~\ref{sec:model}). 

\begin{theorem}\label{the:conf}(\textit{Privacy (inferences)})
  For all sets of crowdworking processes $\Pi$ executed over
  participants $\mathcal{W}$, $\mathcal{P}$, and $\mathcal{R}$ by an
  instance of \system $\varsigma$, then it holds that $\varsigma$ is
  $\delta^\Pi$-Private against covert adversaries restricted to inferences (where the disclosure sets are defined in Section~\ref{sec:model}).
\end{theorem}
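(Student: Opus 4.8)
The plan is to prove the simulatability requirement of the $\delta^\Pi$-Private definition directly: for each class of adversary, distinguished by its involvement in $\pi$, I would construct a PPT simulator that reproduces the adversary's view in the ideal setting (where the trusted party $\mathtt{cp}$ runs $\pi$) taking as input only the tolerated disclosure $\delta_i^\pi$, and then show by reduction that the real and simulated views are computationally indistinguishable. Since the adversary is restricted to inferences it follows the protocol, so its view for a single process $\pi$ consists exactly of the messages it legitimately receives during the \texttt{SPEND} and \ver exchange together with the ledger entries it can read: the public token components $\tau_{pub} = (N, \sigma^i_{RA}(N))$, the three group signatures $\sigma^g_w(\tau_{pub}\|t)$, $\sigma^g_r(\tau_{pub}\|t)$, $\sigma^g_p(\tau_{pub}\|t)$, and the $\mathtt{BEGIN}$/$\mathtt{END}$ events. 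I would first enumerate this view for each class ($\neg R\neg I$, $R\neg I$, $RI$); note that by the architecture all platforms take part in global consensus, so even a $\neg R\neg I$ platform observes the \ver entries, which makes it the binding case at that level.

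Next I would build the simulator. The only artifacts that could reveal hidden targets are the nonce and the group signatures, and both admit target-independent simulation. For the nonce, which RA draws uniformly and independently of $(w,r,t)$, the simulator samples a fresh $N$ and obtains its RA signature (RA being a trusted setup party whose signatures on nonces carry no information about targets), reproducing $\tau_{pub}$ exactly. For the group signatures I would argue by a hybrid that replaces each genuine $\sigma^g_w$ (resp. $\sigma^g_r$) with a signature by a fixed member $w_0$ (resp. $r_0$): by a reduction to the anonymity of the group scheme of~\cite{camenisch2004group} these hybrids are computationally indistinguishable, and after the replacement the transcript no longer depends on the identity of $w$ or $r$. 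At the $RI$ level the full $6$-tuple is disclosed, so the simulator reproduces the transcript verbatim; at the $R\neg I$ level the simulator knows $(p,r,t)$ and reproduces the platform and requester artifacts, simulating only the worker's signature; at the $\neg R\neg I$ level the simulator knows only $(\mathtt{BEGIN},\mathtt{END},p)$ and additionally substitutes a dummy task for the unknown $t$ inside the simulated signatures.

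Finally I would lift the single-process argument to the whole set $\Pi$ by a hybrid over processes, replacing the artifacts of one $\pi$ at a time; the total distinguishing advantage is bounded by $|\Pi|$ times the per-process advantage, itself a sum of the group-signature and individual-signature advantages, hence negligible. The main obstacle, and the step needing the most care, is the $\neg R\neg I$ class: there the adversary can read the \ver entries yet is entitled to learn neither the non-platform identities nor the task, so hiding $w$ and $r$ rests on the \emph{unlinkability} (not merely per-signature anonymity) of the group scheme across all of $\Pi$ -- an adversary must be unable to aggregate observations from many processes to re-identify a worker -- while hiding $t$ requires that a group signature leak nothing about its message to a party that does not already hold it, so that $\sigma^g_{w_0}(\tau_{pub}\|t)$ and $\sigma^g_{w_0}(\tau_{pub}\|t_{\mathrm{dummy}})$ are indistinguishable. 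Showing that both properties survive the composition over $\Pi$ is where the reduction to the group-signature security game must be carried out explicitly; the $R\neg I$ and $RI$ cases then follow as specializations of this argument or directly from the disclosure-set definitions.
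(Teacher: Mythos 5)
Your proposal is sound and reaches the same conclusion from the same cryptographic ingredients, but it takes a genuinely more formal route than the paper. The paper's proof is a direct containment argument in two steps: it first asserts that the token contents stored on the ledger are harmless (the nonce is generated by the registration authority independently of $\pi$, and the group signatures come from a semantically-secure scheme so they reveal nothing beyond the signers' groups), and then, for each disclosure set $\delta_{\neg R \neg I}^\pi$, $\delta_{R \neg I}^\pi$, $\delta_{RI}^\pi$ in turn, it enumerates what each class of participant actually observes in the real execution -- datastore updates, participation in cross-platform and global consensus, task reception, abort messages sent to workers when a task is already solved -- and checks that this information is contained in the corresponding disclosure set. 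It never constructs a simulator or runs a hybrid. You instead build an explicit PPT simulator per adversary class (fresh nonce plus RA signature for $\tau_{pub}$, fixed members $w_0$, $r_0$ for the group signatures, a dummy task at the $\neg R \neg I$ level) and compose over $\Pi$ with a hybrid argument carrying a $|\Pi|$-factor loss. What your approach buys is significant: it produces a quantified reduction where the paper has an informal sketch, and it surfaces two assumptions the paper leaves implicit. First, cross-process \emph{unlinkability} of group signatures (needed so that an adversary cannot aggregate \ver entries across $\Pi$ to re-identify a worker) -- this does follow from full anonymity of the scheme of Camenisch--Lysyanskaya cited by the paper, but the paper never invokes it. Second, and more delicately, your observation that hiding $t$ from $\neg R \neg I$ participants requires the group signature $\sigma^g_{w}(\tau_{pub} \| t)$ to leak nothing about its \emph{message}: this is \emph{not} a standard group-signature property (anonymity hides the signer, not the message), and since anyone holding the group public key can verify a signature against a candidate message, a guessable task space admits a confirm-by-verification attack on the ledger entries. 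The paper's blanket appeal to semantic security glosses over exactly this point, so your flagging it as the step needing an explicit reduction (or an assumption that $t$ enters only as a high-entropy hash) identifies a real gap in the paper's own sketch rather than a defect in your proof. What the paper's approach buys, conversely, is a finer-grained accounting of protocol-level leakage -- which consensus rounds disclose the initiating platform and the \texttt{BEGIN}/\texttt{END} events to whom -- which your view-enumeration step should reproduce in full when you flesh it out.
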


\ifextend
\begin{proof}(Sketch)
First, we focus on the content of tokens and show that it is harmless. For each crowdworking process $\pi$, the information contained within the tokens exchanged and stored in the common datastore is made of (1) the public components $e$-$\tau_{pub}$ and $v$-$\tau_{pub}$ of the tokens involved (\emph{i.e.,} a nonce and a signature) and (2) of the group signatures of the participants to $\pi$. The nonce is generated by the registration authority independently from $\pi$, thus do not leak any information about $\pi$\footnote{Including a generation timestamp into the public component of tokens, for supporting validity periods, would not leak information about $\pi$ beyond its probable execution timeframe, which is already captured by the $BEGIN$ and $END$ events allowed to leak in the $\delta^\Pi$-privacy model. Indeed, the timestamps would be generated by the registration authority independently from $\pi$.}.
Since the group signatures are generated by a semantically-secure group signature scheme, they do not leak anything to the real adversary (computationally bounded) beyond the groups of the signers, which is also available to the ideal adversary. %

Second, we concentrate on the information disclosed along the execution sequence of crowdworking processes. We consider below each disclosure set $\delta_i^\pi$ in turn and show that the computational indistinguishability requirement between the ideal setting and the real setting (where the instance of \system $\varsigma$ executes $\Pi$) is satisfied in all cases. %
\emph{Disclosure set $\delta_{\neg R \neg I}^\pi$.} %
The $\delta_{\neg R \neg I}^\pi$ disclosure set contains the information allowed to be disclosed to the participants that are not involved in a crowdworking process $\pi \in \Pi$ and that have not received the related task (\emph{i.e.,} $(\mathtt{BEGIN}, \mathtt{END}, p)$). First, we focus on the subset of such participants that are requesters or workers. In the ideal settings, these participants learn nothing beyond the information contained in $\delta_{\neg R \neg I}^\pi$. In the real setting, when $\pi$ is executed by $\varsigma$, these participants are not involved in any consensus. They are only able to observe the state of the datastore. The latter is updated exactly once for $\pi$, when $\pi$ ends, for storing the tokens spent: only the ending event $(\mathtt{END})$ is disclosed, which is already contained within $\delta_{\neg R \neg I}^\pi$. Second, we focus on the platforms. In the ideal setting, they are given $\delta_{\neg R \neg I}^\pi$. In the real setting, they participate to the global consensus and are able to observe the state of the datastore. Consequently, they learn $p$ from the global consensus (i.e., the platform that initiates the global consensus) and the ending event $\mathtt{END}$\footnote{ The verifications performed by the worker and the requester involved in $\pi$ are performed on the instance of the datastore stored on $p$. Indeed, the block resulting from the global consensus contains all the necessary information both for performing the verification and for checking that it results from the global consensus.}. Both are already contained within $\delta_{\neg R \neg I}^\pi$. %

\emph{Disclosure set $\delta_{R\neg I}^\pi$.} The $\delta_{R\neg I}^\pi$ disclosure set contains the information allowed to be disclosed to the workers and platforms that have received the task $t$ from $r$ but that are not involved in the crowdworking process $\pi \in \Pi$ (\emph{i.e.,} $(\mathtt{BEGIN},$ $\texttt{END}, p, r, t)$). First, we focus on the subset of such participants that are workers. In the ideal settings, they learn nothing beyond the information contained in $\delta_{R\neg I}^\pi$. In the real setting, we see from the global execution sequence of \system that (1) they are not involved in any consensus, (2) but are able to observe the state of the common datastore, (3) have received the task $t$, and (4) may receive an abort from their platform if they contribute to $t$ while $t$ has already been solved.  From (2) and (4), they are able to learn the ending events of crowdworking processes, from (3) they learn $t$, and from (1) they do not learn any other information. As a result, they learn $(\mathtt{END}, t)$, which is contained in $\delta_{R\neg I}^\pi$. Second, we focus on the subset of participants that are platforms. In the ideal settings, they learn nothing beyond the information contained in $\delta_{R\neg I}^\pi$. In the real setting, (1) they receive the tasks from the requesters, (2) they participate to the cross-platform consensuses and to the global consensuses in addition to (3) observing the same information as the workers. From (1) they learn $t$ and $r$ for each process $\pi \in \Pi$. From (2), they learn $p$, \texttt{BEGIN}, and \texttt{END} because: the cross-platform consensus discloses the initiating platform and the starting event, and the global consensus discloses the initiating platform together with the ending event. From (3), they learn the ending event and the task. As a result, such platforms learn about all $\pi \in \Pi$ the following information $(\mathtt{BEGIN}, \mathtt{END}, r, p, t)$, which is exactly $\delta_{R\neg I}^\pi$. %

\emph{Disclosure set $\delta_{RI}^\pi$.} The computational indistinguishability requirement between the real setting and the ideal setting is trivially satisfied for the $\delta_{RI}^\pi$ set of disclosures because $\delta_{RI}^\pi$ contains all the information about the execution of the crowdworking process $\pi$ so $\varsigma$ does not (and cannot) disclose more information. %

\end{proof}

Second, we extend possible behavior to malicious behaviors aiming at jeopardizing regulations
and show that they are systematically detected by \system (Lemma~\ref{lem:integr} focuses on enforceable regulations and Lemma~\ref{lem:integr_cert} on verifiable regulations). This prevents covert adversaries from performing malicious actions, limiting them to inferences.

\begin{lemma}\label{lem:integr}(\textit{Detection of malicious behavior (enforceable regulations)})
A crowdworking process $\pi$ executed over participants $\mathcal{W}$, $\mathcal{P}$, and $\mathcal{R}$ by an instance of \system $\varsigma$, completes successfully without raising a legitimate alert if and only if $\pi$ does not jeopardize any enforceable regulation.
\end{lemma}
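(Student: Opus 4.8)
The plan is to reduce the biconditional to a statement about token misuse and then prove each direction by examining the \texttt{CHECK}/\texttt{ALERT} machinery. First I would fix the precise meaning of ``jeopardizing an enforceable regulation''. For a regulation $((w,p,r),<,\theta+1)$ the registration authority issues exactly $\theta$ $e$-tokens, and, since each unit of work consumes one distinct token (time cost $1$ by assumption), the aggregate \texttt{SUM}(\texttt{TIMECOST}) constraint is respected precisely when every completed process spends a token that is (i) genuine, (ii) not previously spent, and (iii) held by a legitimate target of the regulation. Consequently, $\pi$ jeopardizes an enforceable regulation if and only if its spending falls into one of the four fraud modes enumerated for \texttt{CHECK}: forgery of a token, replay (double-spending), relay (usurpation), or illegitimate invalidation of a revealed token through a malicious platform failure. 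The biconditional thus becomes: $\pi$ completes without a legitimate alert $\iff$ none of these four modes occurs.

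For the forward direction (successful completion and no legitimate alert $\Rightarrow$ no jeopardy), I would argue the contrapositive by ruling out each fraud mode in turn. A forged token is caught during global consensus by verifying $\sigma^i_{RA}(N)$: under the unforgeability of the RA's individual signature scheme a covert (hence computationally bounded, non-colluding) adversary cannot produce a valid signature on a nonce the RA never issued, so any forgery triggers a signature-check failure and a legitimate alert. A replayed token is caught because the nonce $N$ of every spent token is committed to the ledgers of all platforms; a second appearance of $N$ is detected as double-spending during global consensus. A relayed token is caught via the two group signatures $\sigma^g_w(\tau_{pub})$ and $\sigma^g_w(\tau_{pub} || t)$ (and their analogues for $r$ and $p$): on an \texttt{ALERT}, the RA opens the group signature and compares the revealed signer to the participant bound to the token, exposing any usurpation as a true positive. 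Illegitimate invalidation is caught by the timeout mechanism, where the RA counts the platform's requests for the task against the matching messages committed to the ledger. Hence if $\pi$ completes successfully and no \emph{legitimate} (true-positive) alert is raised, none of the four modes occurred, so $\pi$ respects every enforceable regulation.

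For the backward direction (no jeopardy $\Rightarrow$ successful completion and no legitimate alert), I would trace an honest spending through the task-processing sequence. If the spent $e$-token is genuine, unspent, and owned by a target, then in \texttt{SPEND} all signatures requested from the worker, requester, and platform verify and the nonce is fresh; in \texttt{CHECK} the signature verification, the double-spend test, the usurpation test (the opened group signatures match the bound participant), and the timeout test all pass. Therefore the process advances through Verification to Execution and completes. Crucially, any \texttt{ALERT} that could be raised here is a \emph{false positive}: since no actual fraud occurred, the RA's de-anonymization reveals a consistent signer and the request/ledger counts match, so the alert is by definition \emph{illegitimate} and does not count against the claim.

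The main obstacle I expect is making the equivalence ``jeopardy $\iff$ one of the four fraud modes'' airtight, together with the clean separation of \emph{legitimate} from \emph{illegitimate} alerts. In particular, the relay/usurpation case leans on the correctness of group-signature opening, and the platform-failure case leans on the timeout correctly distinguishing a genuinely stalled process from a malicious invalidation; both arguments must be tied explicitly to the covert-adversary assumption (deviations are admissible only when undetected) and to the unforgeability and traceability guarantees of the individual and group signature schemes, so that a computationally bounded adversary cannot engineer a true violation that escapes every \texttt{CHECK}.
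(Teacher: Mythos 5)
Your overall strategy matches the paper's: both arguments rest on the same four detection mechanisms (verification of the registration authority's signature against forgery, ledger-committed nonces against replay, group-signature opening against usurpation, and the timeout against malicious platform failure), and your two directions of the biconditional correspond closely to the paper's two-part decomposition into ``participants cannot spend more'' and ``participants can spend their $e$-tokens.'' There is, however, one genuine gap in your reduction. You assert that $\pi$ jeopardizes an enforceable regulation if and only if its spending falls into one of the four fraud modes, justified by the premise that ``each unit of work consumes one distinct token.'' But that premise is not an assumption of the model --- it is a property the protocol must enforce --- and the adversary's cheapest attack is precisely to \emph{omit} spending any token at all: no forged, replayed, or relayed token then exists, none of your four modes occurs, yet the aggregate bound is violated. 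The paper closes this hole explicitly as the third case of its ``cannot spend more'' part: the public components of the $e$-tokens are required inputs to the global consensus on the \ver transaction, so a process with a missing token cannot be committed to the ledgers and hence does not ``complete successfully.'' Your forward direction needs this additional step --- that successful completion of $\pi$ entails commitment of a valid token public component for every enforceable regulation targeting $\pi$'s participants --- before your case analysis over the four fraud modes becomes exhaustive.

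A smaller point concerns the backward direction. Your classification of any alert raised against an honest $\pi$ as a false positive is right for alerts about $\pi$'s own spending, but the paper's ``can spend'' part also covers attacks by \emph{other} participants against an honest $\pi$ (relaying its tokens, a platform spending them on the wrong task, or invalidating them by aborting before consensus); these produce \emph{legitimate} alerts while simultaneously preventing completion, so they must be excluded rather than ignored. The exclusion goes through exactly as you anticipate in your closing paragraph: each such attack is detected with certainty, so under the covert-adversary model (deviation only if undetectable) no rational adversary mounts it, and an honest $\pi$ therefore proceeds unimpeded through verification and execution. Making that deterrence step explicit in the backward direction, together with the omission case above, would bring your proof to the same coverage as the paper's sketch; the remaining appeals to unforgeability and group-signature traceability are at the paper's own level of rigor.
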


\ifextend
\begin{proof}(Sketch)
We have to show first that the $e$-tokens allocated to participants can be spent, and second that participants cannot spend more. %

\textbf{Participants can spend their $e$-tokens.} %
In order to prevent participants from spending their $e$-tokens, an attacker has three main possibilities. First, she can try to acquire tokens belonging to another participant and to spend them (i.e., relay attack). However, this rises an \texttt{ALERT} with certainty. Indeed, if a token is spent by an illegitimate participant, it is stored on the datastore and is thus accessible to the legitimate participant who is able to detect it through the nonce and to rise an \texttt{ALERT} to the registration authority (including the group signatures stored along the token). Second, the attacker (platform only) could try to misuse tokens by spending them in a way that was not intended by the legitimate owner (i.e., relay attack). However, this would be detected by participants as well because the signature of the task would not be valid. Third, the attacker (platform only) may abort the process after having received tokens but before performing the global consensus (i.e., illegitimate invalidation). However, after a timeout, the other involved participants simply send an \texttt{ALERT} to the registration authority and prove that their tokens were requested by the platform (signatures of the requests for tokens and signatures), and therefore that the platform behaves illegitimately.

\textbf{Participants cannot spend more.} %
First, an attacker may produce additional $e$-tokens (i.e., forge attack). 
However, the public parts of $e$-tokens must contain valid signatures produced by the registration authority.
Second, an attacker may try to spend an $e$-token more than once (i.e., replay attack). However, the nonce of an $e$-token that must be spent must not already be in the datastore. %
Finally, an attacker may simply omit sending any $e$-token. However, the public parts of $e$-tokens are required for the successful completion of the global consensus. %
\end{proof}

\begin{lemma}\label{lem:integr_cert}(\textit{Detection of malicious behavior (verifiable regulations)})
Participant $P$ can produce a proof about process $\pi$ executed over participants $\mathcal{W}$, $\mathcal{P}$, and $\mathcal{R}$ by an instance of \system $\varsigma$ if and only if $P$ was involved in $\pi$ {\bf and}  $\pi$ completed successfully. 
\end{lemma}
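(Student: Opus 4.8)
The plan is to prove the two directions of the biconditional separately, mirroring the structure of Lemma~\ref{lem:integr} but now reasoning about $v$-tokens and the \texttt{PROVE} function rather than $e$-tokens. Throughout I would rely on two facts recalled from the construction: an accepted proof for a verifiable regulation consists of the private component(s) of the relevant $v$-tokens -- the registration authority's signatures $\sigma^i_{RA}(N, o, \cdot)$ binding the owner $o$ to the targets -- together with the presence of the matching nonce $N$ in the ledger committed to all platforms; and the participants act as covert adversaries, so any deviation that is detected with certainty is effectively excluded.

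For the backward direction (completeness), I would assume $P$ was involved in $\pi$ and that $\pi$ completed successfully, and exhibit a proof. Because $P$ is a target involved in $\pi$, during the \texttt{SPEND} function the platform forwards the full token to all targets, so $P$ obtains the private component binding it to the process. Because $\pi$ completed successfully, the public component of the $v$-token -- the nonce and its registration-authority signature -- was appended to the ledgers of all platforms during global consensus. Hence $P$ holds exactly the two ingredients required by \texttt{PROVE}: the unforgeable binding signature and a nonce that the verifying entity can locate in the committed ledger. Presenting them constitutes a valid proof.

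For the forward direction (soundness), I would assume $P$ can produce an accepted proof and derive both conjuncts from the two checks the verifying entity performs. First, the signature check establishes involvement: the private component is a registration-authority signature on $(N, o, \cdot)$, which is unforgeable under the security of the signature scheme, and the registration authority only ever reveals it to the owner and, via \texttt{SPEND}, to the participants actually involved in $\pi$. A non-involved $P$ could therefore obtain it only by relaying or stealing another participant's token; but binding each $v$-token to its owner is precisely what makes such leakage traceable (the signature names $o$), so, $P$ being a covert adversary, this detectable misbehavior is ruled out and a valid binding to $P$ forces $P$ to have been involved. Second, the nonce check establishes successful completion: the nonce appears in the ledgers of all platforms only if the $v$-token was spent and committed through global consensus, which by construction occurs only when $\pi$ completes successfully; an aborted or never-started process leaves no such nonce, and the verifier rejects.

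The main obstacle is the involvement half of the soundness direction, namely carefully ruling out that a non-involved participant fabricates or relays a valid private component. This rests on combining unforgeability of the registration authority's signatures with the traceability afforded by the owner-binding, and then invoking the covert-adversary assumption to conclude that a detectable relay will not be attempted. I would also take care, as in the minimal-disclosure footnote, to handle regulations with wildcards such as $((w, *, *), >, \theta)$, where only the owner-relevant sub-signatures $\sigma^i_{RA}(N, o, w)$ with $o = w$ are presented, and check that the argument goes through unchanged.
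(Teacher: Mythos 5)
Your proof takes essentially the same route as the paper's own sketch: completeness from the fact that the registration authority produces and distributes the $v$-tokens and that only successfully completed processes commit their nonces to the shared ledger, and soundness from the unforgeability of the registration authority's signatures plus the nonce-in-ledger check, with the owner-binding of $v$-tokens and the covert-adversary assumption ruling out relays. You are in fact somewhat more explicit than the paper (which handles the relay/traceability point only in a construction footnote and also notes double-spending prevention), but the decomposition and key ingredients are identical.
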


\begin{proof}(Sketch)
First, a proof about a crowdworking process $\pi$ that completed successfully can always be produced by the participants involved in $\pi$. Indeed, the $v$-tokens are produced by the registration authority and sent to all participants (i.e., the number of $v$-tokens is correct), and only the successful crowdworking processes store the $v$-tokens in the datastore. %
Second, participants cannot spend a $v$-token more than once because the nonce of a token that must be spent must not already be in the common datastore. Third, participants cannot produce any $v$-token by themselves because their public parts must contain valid signatures produced by the RA.
\end{proof}

Since Theorem~\ref{the:conf} shows that \system is $\delta^\Pi$-private against adversaries restricted to inferences, and Lemma~\ref{lem:integr} and Lemma~\ref{lem:integr_cert} show that malicious behaviors are prevented, it follows that \system is $\delta^\Pi$-private against covert adversaries (Theorem~\ref{the:conf-cov}). 

\begin{theorem}\label{the:conf-cov}(\textit{Privacy (inferences and malicious behavior)})
For all sets of crowdworking processes $\Pi$ executed over participants $\mathcal{W}$, $\mathcal{P}$, and $\mathcal{R}$ by an instance of \system $\varsigma$, then 
$\varsigma$ is $\delta^\Pi$-private against covert adversaries (where the disclosure sets are defined in Section~\ref{sec:model}).
\end{theorem}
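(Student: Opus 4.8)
The plan is to combine the three preceding results into a single composition argument, so the proof of Theorem~\ref{the:conf-cov} will be short and structural rather than computational. The essential observation is that the $\delta^\Pi$-privacy model quantifies over \emph{all} computationally-bounded adversaries $\mathtt{A}$, but a covert adversary is characterised by two distinct capabilities: passive \emph{inference} from the observable execution sequence, and active \emph{deviation} from the protocol that must go undetected by every other participant. Theorem~\ref{the:conf} already establishes computational indistinguishability of $\mathtt{REAL}$ and $\mathtt{IDEAL}$ for the purely passive (honest-but-curious) restriction, so the only gap to close is showing that the \emph{active} capability grants the adversary no additional advantage under the covert assumption.

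First I would recall the defining property of the covert threat model (following~\cite{aumann2007security}): an adversary will only deviate from the protocol if doing so is \emph{not detected}. Hence I would argue that any execution trace produced by a covert adversary that successfully mounts an undetected deviation must, by Lemma~\ref{lem:integr} and Lemma~\ref{lem:integr_cert}, be indistinguishable from an honest execution at the level of the protocol itself. Concretely, Lemma~\ref{lem:integr} states that a process completes without a legitimate alert \emph{if and only if} it does not jeopardise any enforceable regulation, and Lemma~\ref{lem:integr_cert} gives the analogous iff for the production of proofs of verifiable regulations. Read contrapositively, \emph{every} malicious action aimed at jeopardising a regulation (forgery, replay, relay, or illegitimate invalidation of tokens) is caught with certainty. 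Therefore, for a covert adversary whose deviations must remain undetected, the set of protocol-compliant-looking traces coincides exactly with the set of traces an honest-but-curious adversary could observe: any behaviour not in this set raises an alert and is thus excluded by the covert restriction.

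The second step is the reduction: I would show that the adversarial knowledge in the real setting against a covert $\mathtt{A}$ is drawn from the same distribution as against an inference-only $\mathtt{A}$, so that the computational indistinguishability from $\mathtt{IDEAL}$ already guaranteed by Theorem~\ref{the:conf} carries over verbatim. The argument is that the only information a covert adversary can \emph{add} to its view is the outcome of deviations; but by the two lemmas those outcomes are either (i) alerts, which are part of the ideal functionality and disclose nothing beyond the tolerated $\delta_i^\pi$ sets, or (ii) successful completions, which are exactly the honest traces already handled by Theorem~\ref{the:conf}. Since the disclosure sets $\delta_{\neg R\neg I}^\pi, \delta_{R\neg I}^\pi, \delta_{RI}^\pi$ are unchanged, the simulator constructed implicitly in Theorem~\ref{the:conf} simulates the covert adversary's view as well, preserving the $\stackrel{c}{\equiv}$ relation.

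I expect the main obstacle to be making the covert-model reduction fully rigorous rather than hand-waved, specifically pinning down that the \emph{detection} events of Lemmas~\ref{lem:integr} and~\ref{lem:integr_cert} are themselves simulatable and leak nothing outside the tolerated disclosures. One must argue that the \texttt{ALERT}/de-anonymisation machinery (which invokes the registration authority to open a group signature) is triggered only on genuine malicious traces and that, in the ideal world, the trusted party $\mathtt{cp}$ exposes precisely the same alert signal; otherwise the adversary could distinguish the worlds by provoking an alert. I would handle this by folding the alert-raising procedure into the ideal functionality so that both settings emit an abort symbol identically, thereby ensuring the active capability collapses onto the passive case and the composition of Theorem~\ref{the:conf} with the two lemmas yields $\delta^\Pi$-privacy against covert adversaries.
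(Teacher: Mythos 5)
Your proposal follows essentially the same route as the paper: the paper's proof is precisely the composition of Theorem~\ref{the:conf} (privacy against inference-only adversaries) with Lemmas~\ref{lem:integr} and~\ref{lem:integr_cert} (all regulation-jeopardising deviations are detected, hence prevented under the covert model), so that the active capability collapses onto the passive case. Your additional care about the \texttt{ALERT}/de-anonymisation events being simulatable in the ideal world is a sound refinement that the paper's two-sentence proof leaves implicit, but it does not change the underlying argument.
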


\begin{proof}
Theorem~\ref{the:conf} shows that the execution of
\system satisfies $\delta^\Pi$-privacy against covert adversaries restricted to inferences, while Lemmas~\ref{lem:integr} and \ref{lem:integr_cert} show that malicious behaviors aiming at jeopardizing the regulations guaranteed are detected and consequently prevented within \system. As a result it follows directly that the execution of \system satisfies $\delta^\Pi$-privacy against covert adversaries (where the disclosure sets are defined in Section~\ref{sec:model}).
\end{proof}

\section{Coping with Distribution}
\label{sec:dist}

Section~\ref{sec:enf} provides an abstract design for implementing regulations.
In this section, we show how \system,
supports the execution of transactions
on multiple globally distributed platforms that do not necessarily trust each other.
In \system and in order to provide fault tolerance,
each platform consists of
a set of nodes (i.e., replicas) that store copies of the platform's ledger.
\system uses a {\em permissioned blockchain} as its underlying infrastructure (i.e., MPI).
The unique features of blockchain such as
transparency, provenance, fault tolerance, and authenticity are used by many systems
to deploy a wide range of distributed applications in a permissioned settings.
In particular and for a crowdworking system,
the {\em transparency} of blockchains is useful for checking integrity constraints,
{\em provenance} enables \system to trace how data is transformed,
{\em fault tolerance} helps to enhance reliability and availability, and finally,
{\em authenticity} guarantees that signatures and transactions are valid.

\subsection{Blockchain Ledger}

\begin{figure}[t] \center
\includegraphics[width=0.5\linewidth]{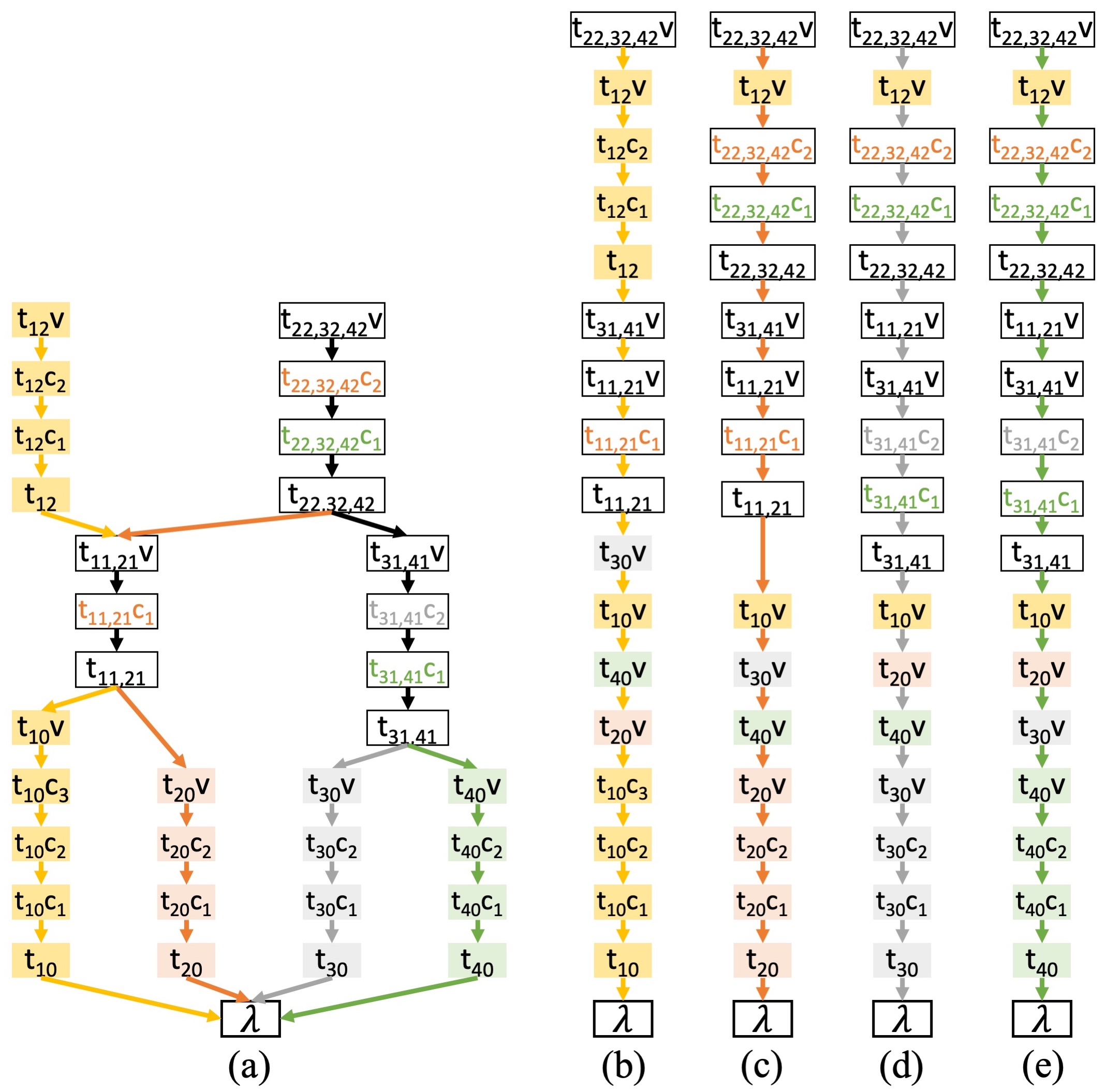}
\caption{(a): The ledger of \system with $4$ platforms,
(b), (c), (d), and (e): The views of the ledger from different platforms}
\label{fig:ledger}
\end{figure}

The blockchain ledger in \system, as mentioned before, maintains the global state of the system and
includes all \sub, \clm, and \ver transactions of
all internal as well as cross-platform tasks.
To ensure data consistency,
an ordering among transactions in which a platform is involved is needed.
The order of transactions in the blockchain ledger is captured by
{\em chaining} transaction blocks together, i.e., 
each transaction block includes the sequence number or the cryptographic hash of the previous transaction block.
Since \system supports both internal and cross-platform tasks and
more than one platform are involved in each cross-platform transaction,
the ledger (similar to \cite{amiri2019caper, amiri2019sharper}) is formed as a {\em directed acyclic graph (DAG)}
where the {\em nodes} of the graph are transaction blocks (each block includes a single transaction)
and {\em edges} enforce the order among transaction blocks.
In addition to \sub, \clm, and \ver transactions, a unique initialization transaction (block),
called the {\em genesis} transaction is also included in the ledger.

Fig.~\ref{fig:ledger}(a) shows a blockchain ledger created in the \system model
for a blockchain infrastructure
consisting of four platforms $p_1$, $p_2$, $p_3$, and $p_4$.
In this figure, $\lambda$ is the unique initialization ({\em genesis}) block of the blockchain,
$t_i$'s are \sub transactions,
$t_ic_j$ is the $j$-th \clm transaction of task $t_i$, and
$t_iv$ is the \ver transaction of task $t_i$.
In Fig.~\ref{fig:ledger}(a), $t_{10}$, $t_{20}$, $t_{30}$, and $t_{40}$ are internal \sub transactions
of different platforms that can be appended to the ledger in parallel.
As shown, $t_{10}$ requires $3$ contributions
(thus $3$ \clm transactions $t_{10}c_1$, $t_{10}c_2$, and $t_{10}c_3$)
whereas each of $t_{20}$, $t_{30}$, and $t_{40}$ needs two contributions.
$t_{10}v$, $t_{20}v$, $t_{30}v$, and $t_{40}v$ are the corresponding  \ver transactions.
$t_{11,21}$ is a cross-platform \sub among platforms $p_1$ and $p_2$.
Similarly, $t_{31,41}$ is a cross-platform \sub among platforms $p_3$ and $p_4$.
Here, $t_{11,21}$ and $t_{31,41}$ require one and two contributions respectively.
Note that the \clm transactions of a cross-platform task might be initiated by different platforms
and as mentioned earlier, the order of these \clm transactions is important (to recognize the $n$ first \clms).

This global directed acyclic graph blockchain ledger includes all transactions of
internal as well as cross-platform tasks initiated by all platforms.
However, to ensure data privacy,
each platform should only access the transactions in which the platform is involved.
As a result, in \system,
the entire blockchain ledger is {\em not maintained} by any specific platform and
each platform $p_i$, as shown in Fig.~\ref{fig:ledger}(b)-(e),
only maintains its own {\em view} of the blockchain ledger including
(1) all \sub and \clm transactions of its internal tasks,
(2) all \sub and \clm transactions of the cross-platform tasks involving the platform, and
(3) \ver transactions of all tasks.
Note that \ver transactions are replicated on every platform
to enable all platforms to check the satisfaction of global regulations.
The global DAG ledger (e.g., Fig.~\ref{fig:ledger}(a)) is
indeed the union of all these physical views
(e.g., Fig.~\ref{fig:ledger}(b)-(e)).
Note that, since there is no data dependency between 
the tasks that platform $p$ is involved in and
the \ver transactions of the tasks that platform $p$ is {\em not} involved in,
the \ver transactions might be appended to the ledgers in different orders, e.g.,
$t_{20}v$ (of $p_2$) and $t_{40}v$ (of $p_4$) are
appended to the ledger of platforms $p_1$ and $p_3$ in different orders.

\subsection{Consensus in \system}

\begin{figure}[t]
\begin{center}
\includegraphics[width= 0.4\linewidth]{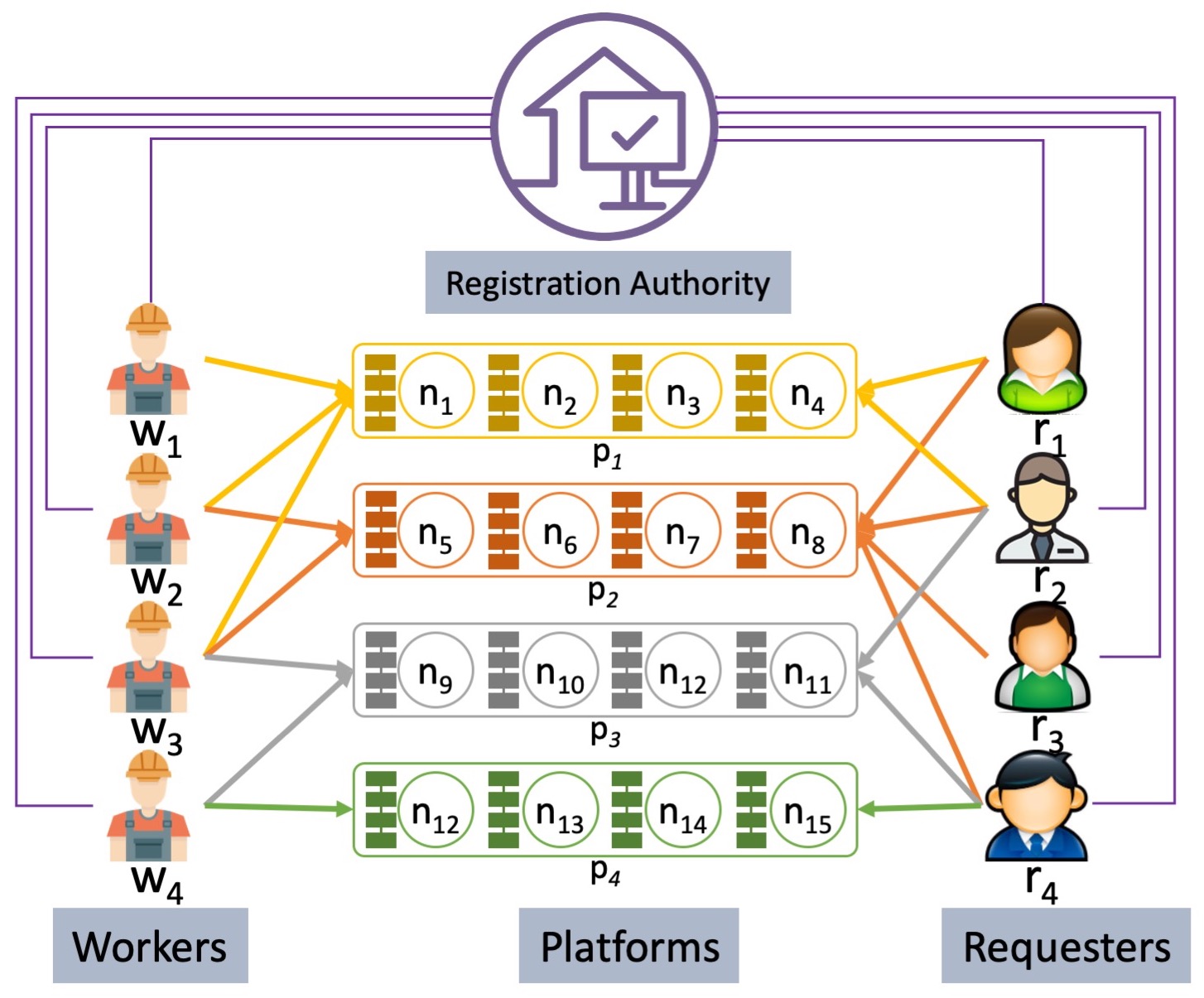}
\caption{\system infrastructure}
\label{fig:arch-separ}
\end{center}
\end{figure}

In \system, each platform consists of a (disjoint) set of nodes (i.e., replicas) where
the platform replicates its own view of the blockchain ledger
on those nodes to achieve fault tolerance.
Nodes follow either the crash or Byzantine failure model.
In the crash failure model,
nodes may fail by stopping and may restart, however,
in the Byzantine failure model,
faulty nodes may exhibit malicious behavior.
Nodes of the same or different platforms need to establish consensus on a unique
order in which entries are appended to the blockchain ledger.
To establish consensus among the nodes, asynchronous fault-tolerant protocols have been used.
Crash fault-tolerant protocols, e.g., Paxos \cite{lamport2001paxos},
guarantee safety in an asynchronous network using $2f{+}1$ nodes
to overcome the simultaneous failure of any $f$ nodes while
in Byzantine fault-tolerant protocols, e.g., PBFT \cite{castro1999practical},
$3f{+}1$ nodes are usually needed to provide safety in the presence of $f$ malicious nodes.
Figure~\ref{fig:arch-separ} shows the crowdworking infrastructure of Figure~\ref{fig:arch} where
each platform consists of 4 replicas (assuming Byzantine failure model and $f=1$)
and replicas use a blockchain to store data.

Completion of a crowdworking task, as discussed earlier, requires
a single \sub transaction,
one or more \clm transactions (depending on the requested number of contributions), and
a \ver transaction.
For an internal task of a platform, \sub and \clm transactions
are replicated only on the nodes of the platform,
hence, {\em local consensus} among nodes of the platform on the order of the transaction is needed.
For a cross-platform task, on the other hand,
\sub and \clm transactions are replicated on every node of all involved platforms.
As a result, {\em cross-platform consensus} among the nodes of only the {\em involved} platforms is needed.
Finally, \ver transactions are appended to the ledger of all platforms, therefore,
all nodes of {\em every} platform participate in a {\em global consensus} protocol.
In this section, we show how local, cross-platform, and global consensus are established
with crash-only or Byzantine nodes.

\subsubsection{Local Consensus}
Processing a \sub or a \clm transaction of an internal task requires
local consensus where
nodes of a single platform, {\em independent} of other platforms,
establish agreement on the order of the transaction.
The local consensus protocol in \system is pluggable and
depending on the failure model of nodes, i.e., crash-only or Byzantine,
a platform uses a crash fault-tolerant protocol, e.g., Paxos \cite{lamport2001paxos}, or
a Byzantine fault-tolerant protocol, e.g., PBFT \cite{castro2002practical}.
Figure~\ref{fig:crash} shows the normal case operation of both Paxos and PBFT protocols.

\begin{figure}[h] \center
\begin{minipage}{.31\textwidth}\centering
\includegraphics[width= \linewidth]{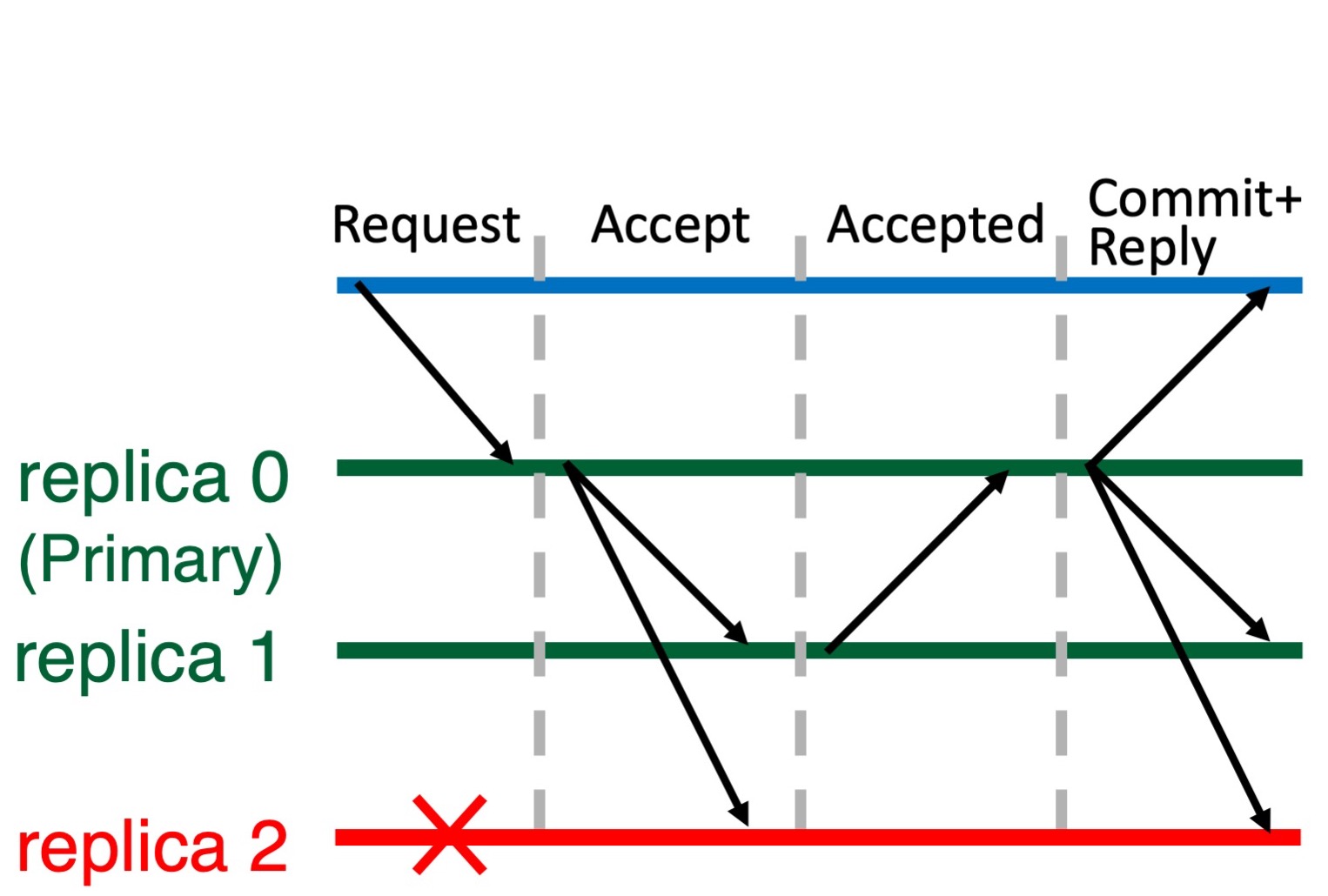}
\end{minipage}
\begin{minipage}{.382\textwidth}\centering
\includegraphics[width= \linewidth]{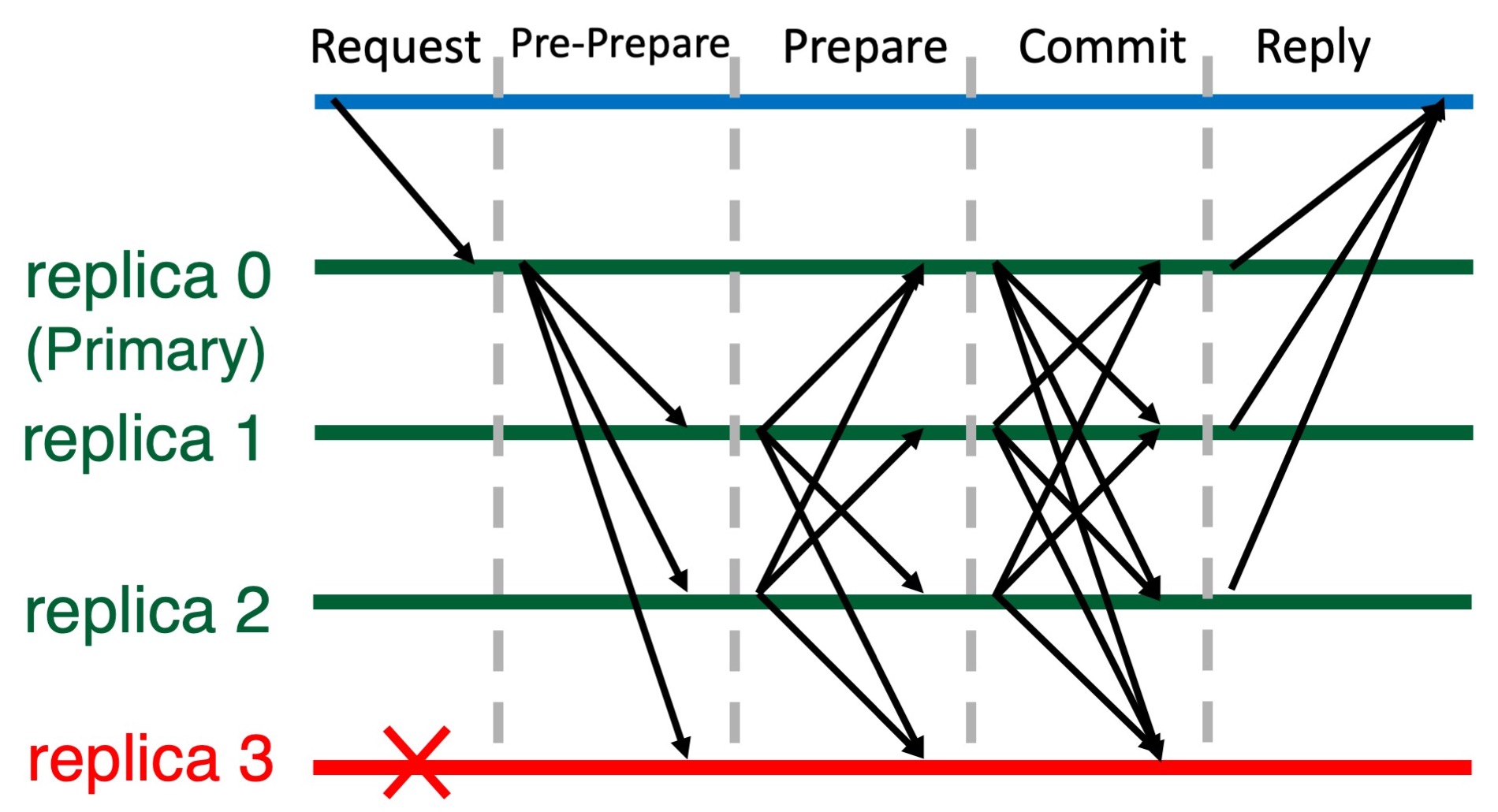}
\end{minipage}
\caption{Normal case operation in (a) Paxos \cite{lamport2001paxos} and (b) PBFT \cite{castro1999practical}}
\label{fig:crash}
\end{figure}

The local consensus protocol
is initiated by a pre-elected node of the platform, called {\em the primary}.
When the primary $p$ receives a valid internal transaction (either \sub or \clm),
it initiates a local consensus algorithm
by multicasting a message, e.g.,
{\sf accept} message in Paxos or {\sf pre-prepare} message in PBFT,
including the requested transaction to other nodes of the platform.
To provide a total order among transaction blocks, the primary also assigns a sequence number to the request.
Instead of a sequence number, the primary can also include
the cryptographic hash of the previous transaction block in the message.
If the transaction is a \clm transaction,
the primary includes the cryptographic hash of the corresponding \sub transaction and
any previously received \clm transactions for that particular task (if any).
The nodes of the platform then establish agreement on a total order of
transactions using the utilized consensus protocol and
append the transaction to the blockchain ledger.

\subsubsection{Cross-Platform and Global Consensus}

Both cross-platform consensus and global consensus require collaboration between multiple platforms.
Since platforms do not trust each other and
the primary node that initiates the transaction might behave maliciously,
\system uses {\em Byzantine} fault-tolerant protocols
in both cross-platform and global consensus.
cross-platform and global consensus, however, are different in two aspects.
First, in cross-platform consensus, only the involved platforms participate, whereas
global consensus is established among all platforms, and
second, at the platform level,
while cross-platform consensus requires agreement from {\em every} involved platform,
in global consensus, agreement from {\em two-thirds} of all platforms is sufficient.
Cross-platform consensus requires agreement from
every involved platform to ensure data consistency
due to the possible data dependency between the cross-platform transaction
and other transactions of an involved platform.
Note that if an involved platform (as a set of nodes) behaves maliciously by not sending an agreement
for a cross-platform transaction initiated by another platform, e.g., a \clm transaction,
its malicious behavior can be detected and penalties imposed.
In global consensus, however, the goal is only to check the correctness of the transaction.
To provide safety for global consensus at the platform level, 
we assume that at most $\lfloor \frac{|P|-1}{3} \rfloor$ platforms might behave maliciously.
As a result, to commit a transaction,
by a similar argument as in PBFT \cite{castro1999practical},
at least two-thirds ($\lfloor \frac{2|P|}{3} \rfloor+1$) of the platforms
must agree on the order of the transaction.

\noindent
{\bf Cross-Platform Consensus.}
Processing \Sub and \clm transactions of a cross-platform task
requires cross-platform consensus among {\em all} involved platforms where
due to the untrustworthiness of platforms a Byzantine fault-tolerant protocol is used.
Since the number of nodes within each platform depends on the failure model of nodes of a platform
(i.e. $2f+1$ crash-only or $3f+1$ Byzantine nodes),
the required number of matching replies from each platform, i.e., the quorum size,
to ensure the safety of protocol is different for different platforms.
We define {\em local-majority} as
the required number of matching replies from the nodes of a platform.
For a platform with crash-only nodes,
local-majority is $f+1$ (from the total $2f+1$ nodes), whereas
for a platform with Byzantine nodes,
local-majority is $2f+1$ (from the total $3f+1$ nodes).

\system processes cross-platform transactions in four phases: \zero, \one, \two, and \three.

Upon receiving a cross-platform (\sub or \clm) transaction,
the (pre-elected) primary node of the (recipient) platform initiates the consensus protocol
by multicasting a \zero message to the primary node of all involved platforms.
Each primary node then assigns a sequence number to the request and multicasts
a \one message to every node of its platform.
During the \two and \three phases,
all nodes of every involved platform communicate to each other to reach agreement
on the order of the cross-platform transaction.

\newcommand{\azero}{{\tiny $\langle\langle\text{\ZERO}{,} h_i{,} d \rangle_{\sigma_{\pi(p_i)}}, m \rangle$}\xspace}
\newcommand{\azeror}{{\tiny $\langle\langle\text{\ZERO}, h_i, d \rangle_{\sigma_r}, m \rangle$}\xspace}
\newcommand{\aone}{{\tiny $\langle\langle\text{\ONE}, h_i, d \rangle_{\sigma_{\pi(p_i)}}, m \rangle$}\xspace}
\newcommand{\aoner}{{\tiny $\langle\langle\text{\ONE}, h_i, d \rangle_{\sigma_r}, m \rangle$}\xspace}
\newcommand{\aonej}{{\tiny $\langle\langle\text{\ONE}, h_j, d, r \rangle_{\sigma_{\pi(p_j)}}, \mu \rangle$}\xspace}
\newcommand{\aonek}{{\tiny $\langle\langle\text{\ONE}, h_j, d, r \rangle_{\sigma_r}, \mu \rangle$}\xspace}
\newcommand{\atwo}{{\tiny $\langle\text{\TWO}, h_i, h_j, d, r \rangle_{\sigma_r}$}\xspace}
\newcommand{\atwoi}{{\tiny $\langle\text{\TWO}, h_i, d, r \rangle_{\sigma_r}$}\xspace}
\newcommand{\atwoj}{{\tiny $\langle\text{\TWO}, h_i, h_j, d, r \rangle_{\sigma_{\pi(p_j)}}$}\xspace}
\newcommand{\athree}{{\tiny $\langle\text{\THREE}, h_i, h_j, ..., h_k, d, r \rangle_{\sigma_r}$}\xspace}

\begin{algorithm}[t]
\scriptsize
\caption{{\small Cross-Platform Consensus}}
\label{alg:cross}
\begin{algorithmic}[1]
\State {\bf init():} 
\State \quad $r$ := {\em node\_id}
\State \quad $p_i$ := the platform that initiates the consensus
\State \quad $\pi(p)$ := the primary node of platform $p$
\State \quad $P$ := the set of involved platforms
\State \quad $\pi(P)$ := the primary nodes of platforms in $P$
\newline
\State {\bf upon receiving} valid transaction $m$ and $(r == \pi(p_i))$
\State \quad {\bf multicast} \azero to $\pi(P)$
\State \quad {\bf multicast} \aone to all nodes of $p_i$ 
\newline
\State {\bf upon receiving} valid $\mu{=}$ \azero and $r {==} \pi(p_j)$
\State \quad if $r$ is not involved in any uncommitted request $m'$
where $m$ and $m'$ intersect in some other platform $p_k$
\State \qquad {\bf multicast} \aonej to all nodes of $p_j$
\State \qquad {\bf multicast} \atwoj to $P$
\newline
\State {\bf upon receiving} valid \aone and $r \in p_i$
\State \qquad {\bf multicast} \atwoi to $P$
\newline
\State {\bf upon receiving} valid \aonej and $r \in p_j$
\State \qquad {\bf multicast} \atwo to $P$
\newline
\State {\bf upon receiving} valid matching \atwo from {\em local-majority} of every platform $p_j$ in $P$
\State \quad {\bf multicast} \athree to $P$
\newline
\State {\bf upon receiving} valid \athree from {\em local-majority} of every platform in $P$
\State \quad {\bf append} the transaction block to the ledger
\end{algorithmic}
\end{algorithm}

Algorithm~\ref{alg:cross} presents the normal case of {\em cross-platform consensus} in \system.
Although not explicitly mentioned, every sent and received message is logged by nodes.
As shown in lines 1-6 of the algorithm,
$p_i$ is the platform that initiates the transaction,
$\pi(p)$ represents the primary node of platform $p$,
$P$ is the set of involved platforms in the transaction where
$\pi(P)$ represents their current primary nodes (one node per platform).

Once the primary $\pi(p_i)$ of the initiator platform $p_i$ (called the {\em initiator primary})
receives a valid \sub or \clm transaction,
as presented in lines 7-8,
the initiator primary node
assigns sequence number $h_i$ to the request and
multicasts a {\em signed} \zero message
$\langle\langle\text{\scriptsize \ZERO}, h_i, d \rangle_{\sigma_{\pi(p_i)}}, m \rangle$
to the primary nodes of all involved platforms where
$m$ is the received message (either \sub or \clm) and
$d=D(m)$ is the digest of $m$.
The sequence number $h_i$ represents the correct order of the transaction block in the initiator platform $p_i$.
If the transaction is a \clm transaction,
the primary includes the cryptographic hash of the corresponding \sub transaction as well.

As shown in line 9, the initiator primary node also multicasts a {\em signed} \one message
$\langle\langle\text{\scriptsize \ONE}, h_i, d \rangle_{\sigma_{\pi(p_i)}}, m \rangle$
to the nodes of its platform where $d=D(m)$ is the digest of $m$.

As indicated in lines 10-12, once the primary node of some platform $p_j$ receives a \zero message $\mu$
from the initiator primary,
it first validates the message.
If the node is currently waiting for a \three message of some cross-platform transaction $m'$ where
the involved platforms of the two requests $m$ and $m'$ intersect in $p_j$ as well as some other platform $p_k$,
the node does not process the new transaction $m$ before the earlier transaction $m'$ gets committed.
This ensures that requests are committed in the same order on overlapping platforms (Consistency), e.g.,
$m$ and $m'$ are committed in the same order on both $p_j$ and $p_k$
(\system addresses deadlock situation, i.e.,
where different platforms receive \zero messages in different order in the same way as SharPer \cite{amiri2019sharper}).
If the primary is not waiting for an uncommitted transaction,
it assigns sequence number $h_j$ to the message and multicasts a {\em signed} \one message
$\langle\langle\text{\scriptsize \ONE}, h_j, d \rangle_{\sigma_{\pi(p_j)}}, \mu \rangle$
to the nodes of its platform.
The primary node $\pi(p_j)$ also piggybacks the \zero message $\mu$ to its \one message to enable the
node to access the request and validate the \one message.

The primary node $\pi(p_j)$, as presented in line 13, multicasts a signed \two message
$\langle\text{\scriptsize \TWO}, h_i, h_j, d \rangle_{\sigma_{\pi(p_j)}}$
to {\em every} node of {\em all} involved platforms.

\begin{figure}[t] \center
\begin{minipage}{.4\textwidth}\centering
\includegraphics[width= \linewidth]{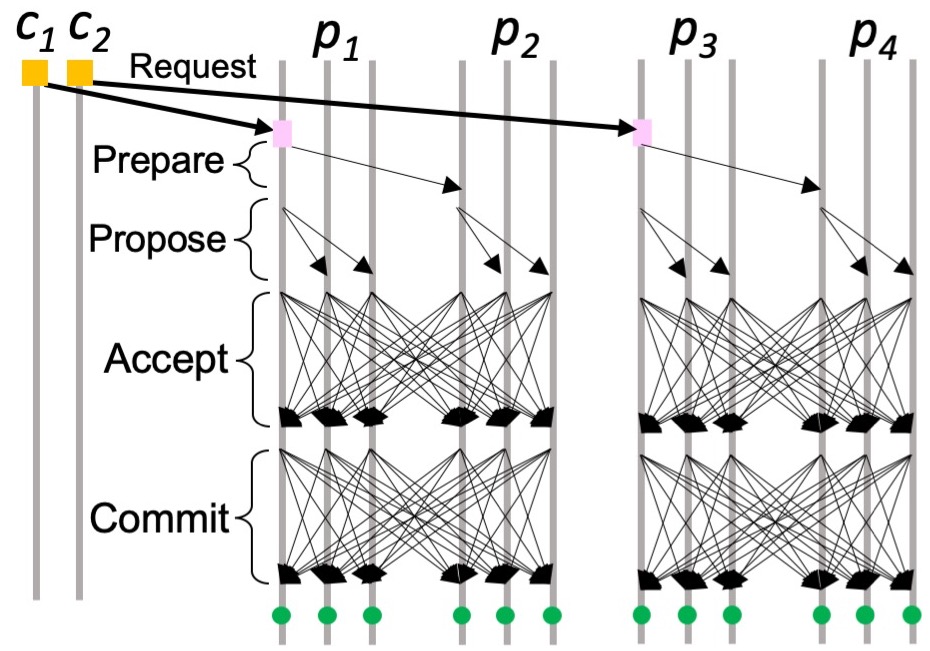}
\end{minipage}\hspace{2em}
\begin{minipage}{.46\textwidth}\centering
\includegraphics[width= \linewidth]{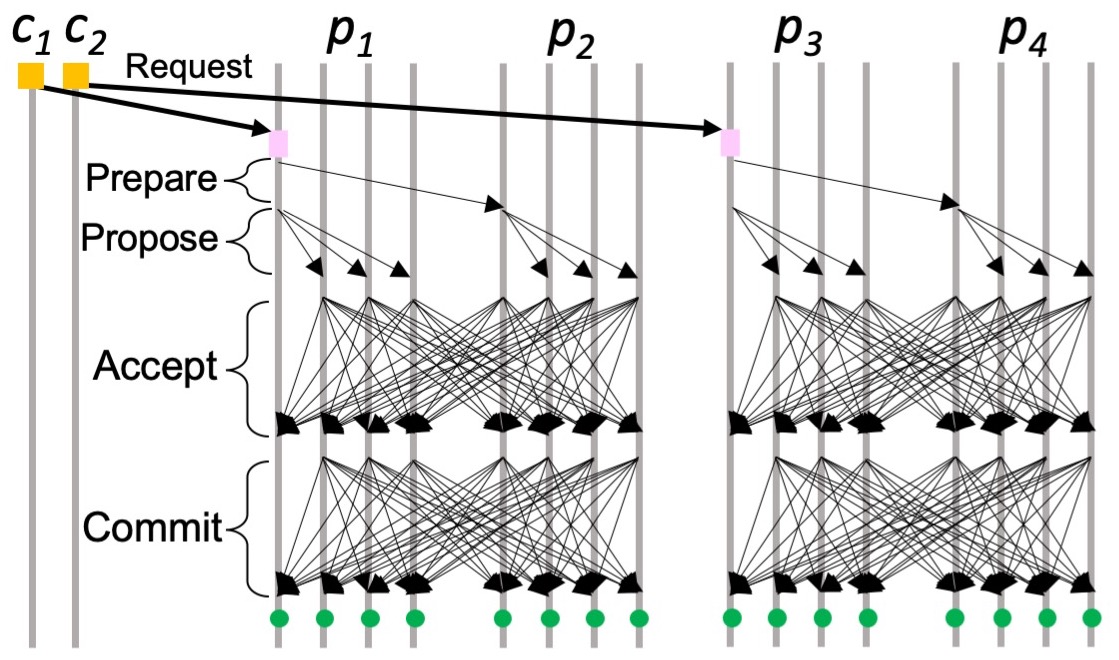}
\end{minipage}
\caption{Two concurrent cross-platform transaction flows for
(a) crash-only and (b) Byzantine nodes in \system where two disjoint sets of
platforms are involved in each task}
\label{fig:cross}
\end{figure}

Upon receiving a \one message
Once a node $r$ of an involved platform $p_j$ receives a \one message, as indicated in lines 8-10,
it validates the signature and message digest
(if the node belongs to the initiator platform  ($i=j$), it also checks $h_i$ to be valid (within a certain range))
since a malicious primary might multicast a request with an invalid sequence number.
In addition, if the node is currently involved in an uncommitted cross-platform request $m'$ where
the involved platforms of two requests $m$ and $m'$ overlap in some other platform, 
the node does not process the new request $m$ before the earlier request $m'$ is processed.
This is needed to ensure requests are committed in the same order on different platforms.
The node then multicasts a signed \two message including
the corresponding sequence number $h_j$ (that represents the order of $m$ in platform $p_j$),
and the digest $d=D(m)$ 
to {\em every} node of {\em all} involved platforms.
Note that if a platform behaves maliciously by not sending \two messages to other involved platforms,
the initiator platform can report the malicious behavior of the platform
by sending an ALERT message (similar to Section~\ref{sec:check})
to the registration authority resulting in imposing penalties.

As presented in lines 18-19,
each node waits for valid matching \two messages
from a local majority (i.e., either $f+1$ or $2f+1$ depending on the failure model) of {\em every} involved platform
with $h_i$ and $d$ that matches the \one message which was sent by primary $\pi(p_i)$.
We define the predicate {\sf accepted-local}$_{p_j}(m, h_i, h_j,r)$ to be true if and only if node $r$
has received the request $m$, a \one for $m$ with sequence number $h_i$ from the initiator platform $p_i$
and \two messages from a local majority of an involved platform $p_j$ that match the \one message.
The predicate {\sf accepted}$(m, h, r)$ where
$h = [h_i, h_j, ..., h_k]$
is then defined to be true on node $r$
if and only if {\sf accepted-local}$_{p_j}$ is true for {\em every} involved platform $p_j$ in cross-platform request $m$.
The order of sequence numbers in the predicate is an ascending order determined by their platform ids.
The \one and \two phases of the algorithm basically guarantee that non-faulty nodes agree on a total order
for the transactions.
When {\sf accepted}$(m, h, r)$ becomes true,
node $r$ multicasts a signed \three message
$\langle\text{\scriptsize \THREE}, h, d, r \rangle_{\sigma_r}$
to all nodes of every involved platforms.

Finally, as shown in lines 20-21, node $r$ waits for valid matching \three messages
from a local majority of {\em every} involved platform that matches its \three message.
The predicate {\sf committed-local}$_{p_j}(m, h, r)$
is defined to be true on node $r$ if and only if 
{\sf accepted}$(m, h, r)$ is true and node $r$ has accepted valid matching \three
messages from a local majority of platform $p_j$ that match the \one message for cross-platform transaction $m$.
The predicate {\sf committed}$(m, h, v, r)$ is then defined to be true on node $r$
if and only if {\sf committed-local}$_{p_j}$ is true for {\em every} involved platform $p_j$ in cross-platform transaction $m$.
The {\sf committed} predicate indeed shows that at least $f+1$ nodes of each involved platform have multicast valid \three messages.
When the {\sf committed} predicate becomes true, the node considers the transaction as committed.
If all transactions with lower sequence numbers than $h_j$ have already been committed,
the node appends a transaction block including the transaction
as well as the corresponding \three messages to its copy of the ledger.
Note that since {\sf \small commit} messages include the digest (cryptographic hash) of the corresponding transactions,
appending valid signed {\sf \small commit} messages to the blockchain ledger in addition to the transactions,
provides the same level of {\em immutability} guarantee as
including the cryptographic hash of the previous transaction in the transaction block, i.e.,
any attempt to alter the block data can easily be detected.
In terms of message complexity, \zero phase consists of $|P|$ messages, \one phase includes $n$ messages, and
\two and \three phases, each requires $n^2$ messages.

Figure~\ref{fig:cross} shows
the normal case operation for \system
to execute two concurrent cross-platform transactions in the presence of (a) crash-only and (b) Byzantine nodes where
each transaction accesses two disjoint platforms.
The network consists of four platforms where each platform includes either three or four nodes ($f=1$).

\begin{algorithm}[t]
\scriptsize
\caption{{\small Global Consensus}}
\label{alg:global}
\begin{algorithmic}[1]
\State {\bf init():} 
\State \quad $r$ := {\em node\_id}
\State \quad $p_i$ := the platform that initiates the consensus
\State \quad $\pi(p)$ := the primary node of platform $p$
\newline
\State {\bf upon receiving} valid transaction $m$ and $(r == \pi(p_i))$
\State \quad {\bf multicast} \azero to the primary node of every platform
\State \quad {\bf multicast} \aone to all nodes of $p_i$ 
\newline
\State {\bf upon receiving} valid $\mu{=}$ \azero and $r {==} \pi(p_j)$
\State \quad if $r$ is not involved in any uncommitted request $m'$
where $m$ and $m'$ intersect in some other platform $p_k$
\State \qquad {\bf multicast} \aonej to all nodes of $p_j$
\State \qquad {\bf multicast} \atwoj to all nodes
\newline
\State {\bf upon receiving} valid \aone and $r \in p_i$
\State \qquad {\bf multicast} \atwoi to all nodes
\newline
\State {\bf upon receiving} valid \aonej and $r \in p_j$
\State \qquad {\bf multicast} \atwo to all nodes
\newline
\State {\bf upon receiving} valid matching \atwo from {\em local-majority} of {\bf two-thirds} of platforms
\State \quad {\bf multicast} \athree to all nodes
\newline
\State {\bf upon receiving} valid \athree from {\em local-majority} of {\bf two-thirds} of platforms
\State \quad {\bf append} the transaction block to the ledger
\end{algorithmic}
\end{algorithm}

\subsubsection{Global Consensus}
The \ver transactions include group signatures and all tokens
that are consumed by different participants to perform a particular task.
In \system and in order to enable all platforms to check constraints,
\ver transactions are appended to the blockchains of all platforms.
To do so, a Byzantine fault-tolerant protocol is run among all nodes of every platform where
the protocol needs agreement from the {\em local majority} of the nodes of {\em two-thirds} of the platforms.
The local majority, similar to cross-platform consensus,
is defined based on the utilized consensus protocol within each platform.
However, there are two main differences between cross-platform consensus and global consensus.
First, in cross-platform consensus only the involved platforms participate, whereas,
in global consensus, every platform verifies transactions by checking the group signatures and consumed tokens.
Second, cross-platform consensus requires agreement from {\em every} platform, whereas,
in global consensus, agreement from only {\em two-thirds} of platforms is needed.
In fact, in cross-platform consensus, there might be some dependency between cross-platform transactions and internal ones,
thus, to ensure data consistency, every involved platform must agree on the order of the cross-platform transaction.
However, in global consensus, the goal is to verify the correctness of the transaction and as soon as
two-thirds of platforms verify that (assuming at most one-third of platforms might behave maliciously),
the transaction can be appended to the blockchain ledger.

\ifextend
Algorithm~\ref{alg:global} shows the normal case of {\em global consensus} in \system where
a Byzantine protocol is run among all nodes of every platform
(in contrast to cross-platform consensus where only the involved platforms participate).
The protocol, similar to cross-platform consensus,
process a transaction in four phases of \zero (lines 5-6), \one (lines 7-10), \two (lines 11-15), and \three (lines 16-19),
however, each node waits for matching \two and \three messages
from the local majority of only {\em two-thirds} of the platforms (as shown in lines 16 and 18).
\fi

Figure~\ref{fig:global} presents
the normal case operation of global consensus in \system. Here all platforms include crash-only nodes where $f=1$ and
the network consists of four platforms.

\begin{figure}[t] \center
\includegraphics[width= 0.4 \linewidth]{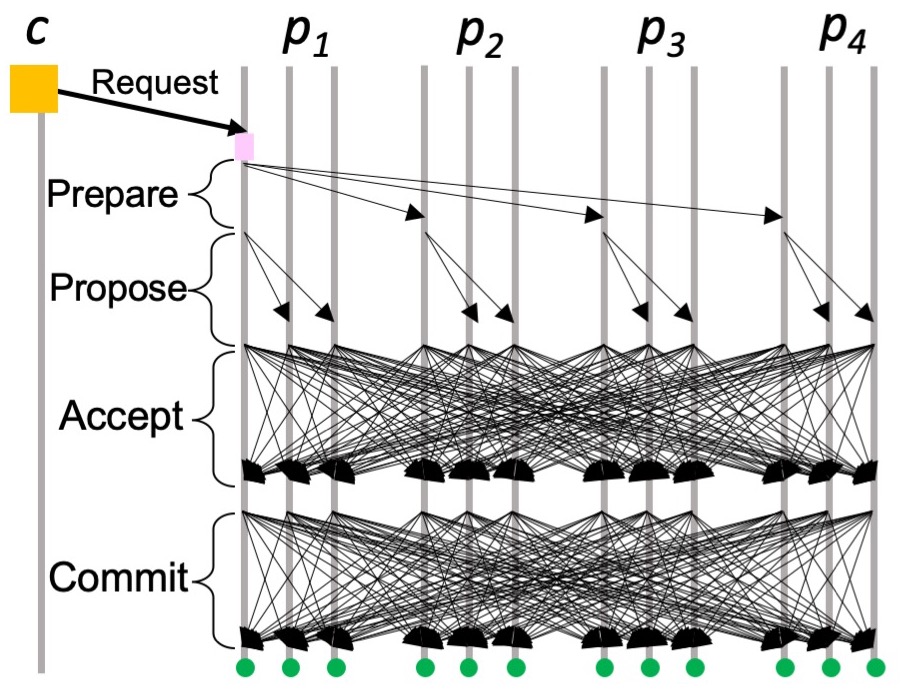}
\caption{Global consensus in \system}
\label{fig:global}
\end{figure}

\subsubsection{Primary Failure Handling and Deadlock situation}

In both cross-platform and global consensus protocols,
in addition to the normal case operation, \system has to deal with two other scenarios.
First, when the primary node fails.
Second, when nodes have not received \two messages from the
local-majority of either all involved platforms (cross-platform consensus)
or two-thirds of platforms (global consensus).
Indeed, the primary nodes of different platforms might multicast their \zero messages in parallel, hence,
different overlapping platforms might receive the messages in different order and
do not send \two messages for the second transaction in order to guarantee consistency.
We use similar techniques as SharPer \cite{amiri2019sharper} to address these two situations.
Note that for the local consensus, since the consensus protocol is pluggable,
\system follows the primary failure handling of the plugged protocol.
In addition, since only a single platform is involved in the local consensus, the second scenario
will not occur.

\medskip\noindent
{\bf Primary Failure Handling.}
The primary failure handling routine provides liveness by allowing the system to make progress when a primary fails.
The routine is triggered by timeout.
When node $r$ of some platform $p_j$ receives a valid \one message from its primary
for either an internal or a cross-platform transaction,
it starts a timer that expires after some predefined time $\tau$
(time $\tau$ is longer for cross-platform and \ver transactions).
If the timer has expired and the node has not committed the message,
the node suspects that the primary is faulty.
Similarly, if nodes of the initiator platform receive \two messages from other platforms with
sequence number that is different from their sequence number $h_i$
(that they have received from the initiator primary in the \one message),
they suspect that the initiator primary is faulty
(i.e., a malicious initiator primary might assign different sequence numbers to the same transaction
and send non-matching \zero messages to different platforms).
Note that nodes within other platforms are also able to detect conflicting messages
and send a \twoq message $\langle\text{\scriptsize \TWOQ}, h_i, h_j, d, r \rangle$ message
to every node of the initiator platform $p_i$ (the platform of the faulty primary).
When node $r$ (of the initiator platform) suspects that the initiator primary is faulty,
it initiates the routine by multicasting a \vchange message including all
received valid \one, \two, \twoq, and \three messages for all internal as well as cross-platform transactions
to every node within the platform.
An \twoq message is valid if it is received from at least $2f+1$ different nodes of another platform.
Upon receiving $2f$ \vchange message, the next primary (determined in the round robin manner based on the id of nodes)
handles the uncommitted transactions by multicasting a \newv message
including $2f+1$ \vchange messages and a \one message
for each uncommitted request to every node within the platform.
For uncommitted cross-platform and \ver transaction also, the new primary multicasts a \newv message
including $2f+1$ \vchange messages and related \zero messages
to the primary nodes of all involved platforms.

\medskip\noindent
{\bf Deadlock Situations.}
If the primary nodes of different platforms receive \zero messages
for different cross-platform transactions in conflicting orders,
the system might face a deadlock situation.
In particular, if two platforms $p_1$ and $p_2$ receive \zero messages for
cross-platform transactions $m$ and $m'$ in conflicting orders,
e.g., $p_1$ receives $m$ first and then $m'$ and $p_2$ receives $m'$ first and then $m$,
to ensure consistency property
(as explained in Algorithm~\ref{alg:cross}, line 11 and Algorithm~\ref{alg:global}, line 9),
they do not process the second transaction before committing the first one,
i.e., $p_1$ waits for the \three message of $m$ and $p_2$ waits the \three message of $m'$.
However, since committing a cross-platform transaction requires \two messages
from local majority of every involved platform, neither of $m$ and $m'$ can be committed (deadlock situation).
To resolve deadlock situation,
all platforms must reach to a unique order between conflicting messages
and based on that undo their sent \two messages if needed.
We explain the technique in two cases.
First, if both $m$ and $m'$ have been initiated by the same platform,
nodes of the other platforms, which are involved in both $m$ and $m'$, can detect the correct order
by comparing the sequence numbers of $m$ and $m'$ and in case
a node has already sent an \two message for the request with the higher sequence number,
it needs to undo its sent \two message by multicasting a \twof message (with the same structure as \two message)
with a different sequence number that is assigned by the primary node of the platform.
All nodes then remove the previous sent \two messages from their logs.
Second, when $m$ and $m'$ have been initiated by different platforms, e.g.,
$m$ is initiated by $p_3$ and $p_1$, $p_2$, and $p_3$ are involved in $m$ and
$m'$ is initiated by $p_4$ and $p_1$, $p_2$, and $p_4$ are involved in $m'$.
In such a situation and to determine a unique order, transactions $m$ and $m'$ are ordered 
based on the id of their initiator platforms.
As a result, if a node has already sent an \two message for the request with the higher initiator platform id,
it sends a \twof message to every node with a different sequence number (assigned by the primary of the platform).
All nodes also remove the previous sent \two messages from their logs.
Note that this technique can be used for the situations where
more than two platforms are in the intersection of concurrent requests or
when there are more than two requests with conflicting orders.

\subsubsection{Correctness Arguments}

\ifnextend
A consensus protocol has to satisfy agreement, validity,
consistency (total order), and termination properties \cite{cachin2011introduction}.
\fi
\ifextend
A consensus protocol has to satisfy four main properties \cite{cachin2011introduction}:
(1) {\em agreement:} every correct node must agree on the same value (Lemma~\ref{lmm:agree}),
(2) {\em Validity (integrity):} if a correct node commits a value,
then the value must have been proposed by some correct node (Lemma~\ref{lmm:val}),
(3) {\em Consistency (total order):} all correct nodes commit the same value in the same order (Lemma~\ref{lmm:cons}), and
(4) {\em termination:} eventually every node commits some value (Lemma~\ref{lmm:ter}).
\fi
The first three properties are known as {\em safety} and the termination property is known as {\em liveness}.
In an asynchronous system, where nodes can fail,
as shown by Fischer et al. \cite{fischer1985impossibility}, 
consensus has no solution that is both safe and live.
Therefore, \system guarantees safety in an asynchronous network, however,
similar to most fault-tolerant protocols, deals with termination (liveness) only during periods of synchrony using timers.
\ifnextend
All proofs are omitted due to space limitation and discussed in the full version of this paper \cite{amiri2020separ}.
\fi

\begin{lemma}\label{lmm:agree} (\textit{Agreement})
If node $r$ commits request $m$ with sequence number $h$,
no other correct node commits request $m'$ ($m \neq m'$) with the same sequence number $h$.
\end{lemma}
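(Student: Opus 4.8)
The plan is to argue by contradiction with a quorum-intersection argument in the style of PBFT~\cite{castro1999practical}, reducing the whole claim to the \two phase, which plays the role of the PBFT \emph{prepare} phase. Suppose, towards a contradiction, that a correct node $r$ commits $m$ with sequence number $h$ while a correct node $r'$ commits $m' \neq m$ with the same $h$. By the definition of the {\sf committed} predicate, committing a request forces {\sf committed-local}$_{p_j}$, and hence {\sf accepted-local}$_{p_j}$, to be true at \emph{every} involved platform $p_j$ (Algorithm~\ref{alg:cross}, lines 18--21). In particular, for the platform $p_j$ whose local slot is fixed by $h$ (its component $h_j$), there is a set $Q$ of nodes of $p_j$ of size at least the local-majority that multicast matching \two messages for $(m, h_j)$, and likewise a set $Q'$ of size at least the local-majority that multicast matching \two messages for $(m', h_j)$.

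First I would establish node-level quorum intersection inside $p_j$ for both failure models. For a Byzantine platform the local-majority is $2f{+}1$ out of $3f{+}1$ nodes, so $|Q \cap Q'| \geq 2(2f{+}1) - (3f{+}1) = f{+}1$, which guarantees at least one non-faulty node in $Q \cap Q'$. For a crash-only platform the local-majority is $f{+}1$ out of $2f{+}1$ nodes, so $|Q \cap Q'| \geq 1$, and since crash nodes never equivocate this node is also correct. In either case there is a correct node $s \in Q \cap Q'$ that sent a \two message both for $(m,h_j)$ and for $(m',h_j)$. Because a correct node issues at most one \two message per sequence number and $m \neq m'$, this is a contradiction, establishing the claim for cross-platform consensus.

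Next I would lift the argument to global consensus (Algorithm~\ref{alg:global}), where commitment requires matching messages from the local-majority of only \emph{two-thirds} of the platforms rather than from every involved platform. Here I would prepend a platform-level intersection step to the node-level one: the two committing sets each contain at least $\lfloor \tfrac{2|P|}{3}\rfloor{+}1$ platforms, so under the assumption that at most $\lfloor \tfrac{|P|-1}{3}\rfloor$ platforms are Byzantine, their intersection contains at least one \emph{correct} platform $p_j$. Within that correct platform the node-level intersection argument above applies verbatim, again producing a correct node that would have to send conflicting \two messages---a contradiction.

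The main obstacle I anticipate is the bookkeeping around what ``the same sequence number $h$'' means when $m$ and $m'$ may involve different platform sets: I must pin down a single platform $p_j$ at which both slots coincide so that the intersection argument has a common namespace to work in, and argue that if the involved sets were disjoint the local slots would index distinct ledgers and the hypothesis would be vacuous. A secondary subtlety is that a malicious initiator primary may assign $h_j$ to two different requests via equivocating \zero/\one messages; I would stress that this does not break the argument, since safety rests entirely on the intersection of \two quorums of \emph{correct} nodes, each of which refuses to accept two requests under one sequence number regardless of what the primary proposes.
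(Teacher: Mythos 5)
Your normal-case argument coincides with the paper's own: the paper likewise reduces agreement to quorum intersection inside a common involved platform $p_j$, with exactly the same local-majority counts ($f{+}1$ of $2f{+}1$ crash-only, $2f{+}1$ of $3f{+}1$ Byzantine), concluding that if {\sf accepted}$(m,h,r)$ and {\sf accepted}$(m',h,q)$ both held then some non-faulty node would have sent two conflicting \two messages with the same sequence number but different digests. Your added platform-level intersection step for global consensus (two committing two-thirds quorums of platforms intersect in at least one correct platform under the $\lfloor (|P|-1)/3 \rfloor$ bound, then recurse to the node level) makes explicit something the paper's sketch leaves implicit, and your handling of an equivocating initiator primary matches the paper's remark that a malicious primary cannot violate safety.

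There is, however, one genuine gap: the paper's proof has a second component, covering primary failure, which your proposal omits and which your key invariant quietly requires. You assert that a correct node issues at most one \two message per sequence number, but in \system this holds only within a single primary regime. First, the primary-failure routine lets a new primary, after collecting $2f{+}1$ \vchange messages, re-propose uncommitted requests via \newv messages, after which correct nodes may legitimately send fresh \two messages for the same slot; without an argument that committed slots survive this transition, your contradiction does not go through when $Q$ and $Q'$ form under different primaries. Second, the deadlock-resolution mechanism explicitly allows correct nodes to \emph{undo} a previously sent \two by multicasting a \twof message and removing the old \two from their logs, so per-slot uniqueness of \two messages is not an unconditional invariant of the protocol. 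The paper closes the first hole with the standard view-change argument: each committed request is replicated on a local-majority quorum, becoming primary requires agreement from a quorum, and any two such quorums intersect in a correct node aware of the latest committed request, so a new primary cannot rebind a committed slot to a different request. Your proof needs this (or an equivalent cross-view and undo-aware) argument to license the uniqueness invariant; as written it only rules out conflicting \two quorums formed under the same primaries, which is strictly weaker than the lemma.
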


The \one and \two phases of both cross-platform and global consensus protocols
guarantee that correct nodes agree on a total order of requests.
Indeed, if the {\sf accepted}$(m, h, r)$ predicate where $h = [h_i, h_j, ..., h_k]$ is true, 
then {\sf accepted}$(m', h, q)$ is false for any non-faulty
node $q$ (including $r=q$) and any $m'$ such that $m \neq m'$.
This is true because $(m, h, r)$
implies that {\sf accepted-local}$_{p_j}(m, h_i, h_j,r)$ is true for each involved platform $p_j$ and
a local majority ($f+1$ crash-only or $2f+1$ Byzantine node) of platform $p_j$
have sent \two (or \one) messages
for request $m$ with sequence number $h_j$.
As a result, for {\sf accepted}$(m', h, q)$ to be true, at least one
non-faulty nodes needs to have sent two conflicting \two messages
with the same sequence number but different message digest.
This condition guarantees that first, a malicious primary cannot violate the safety and
second, at most one of the concurrent {\em conflicting} transactions, i.e.,
transactions that overlap in at least one platform,
can collect the required number of messages from each overlapping platform.

If the primary fails,
since (1) each committed request has been replicated on a quorum $Q_1$ of local majority nodes,
(2) to become primary agreement from a quorum of nodes is needed, and
(3) any two quorums (with size equal to local majority) intersect in at least one correct node 
that is aware of the latest committed request.
\system guarantees the agreement property for both internal as well as cross-platform transactions.

\begin{lemma}\label{lmm:val} (\textit{Validity})
If a correct node $r$ commits $m$, then $m$ must have been proposed by some correct node $\pi$.
\end{lemma}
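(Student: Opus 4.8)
The plan is to prove validity by tracing the commitment of $m$ backwards through the phases of the consensus protocol to the \one message that introduced it, and then to rule out fabrication by a faulty node via the unforgeability of the signatures involved. First I would unfold the commitment condition: a correct node $r$ commits $m$ with sequence number $h=[h_i,h_j,\dots,h_k]$ only when the predicate {\sf committed}$(m,h,v,r)$ holds. By definition this requires {\sf committed-local}$_{p_j}$ on every involved platform $p_j$, which in turn requires both {\sf accepted}$(m,h,r)$ and valid matching \three messages from a local majority of $p_j$. Unfolding {\sf accepted} one further step, {\sf accepted-local}$_{p_j}(m,h_i,h_j,r)$ demands that node $r$ has received the request $m$ itself, a valid \one message for $m$ carrying digest $d=D(m)$ and signed by the (initiator or forwarding) primary, and matching \two messages from a local majority of $p_j$.

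Since every correct node validates signatures and verifies $d=D(m)$ before contributing to {\sf accepted-local}, and since the \two and \three messages along the chain all carry the same digest $d$, the transaction committed by $r$ is bit-for-bit the one that appeared in a valid, signed \one message. Hence the mere existence of a commit by a correct node forces the prior existence of a genuine proposal for exactly that $m$; a node cannot commit a transaction that was never proposed. The same argument applies to the global consensus of \ver transactions, since it shares the four-phase structure and the predicate definitions, while for local consensus validity is inherited directly from the pluggable underlying protocol (Paxos or PBFT).

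The remaining and most delicate step is to rule out that a Byzantine primary manufactured $m$ out of thin air, which is what separates ``was proposed'' from ``was proposed by a correct source.'' Here I would rely on the fact that a primary issues a \one message only after receiving a \emph{valid} client transaction---a \sub signed by a requester or a \clm signed by a worker---and that a correct node only sets {\sf accepted-local} for a well-formed $m$ whose embedded client and registration-authority signatures verify. Under the assumed unforgeability of these signature schemes, even a Byzantine initiator primary cannot produce a transaction attributed to a requester or worker who never authorized it, so the committed $m$ must ultimately originate from a legitimate, correctly-signed request. I expect the main obstacle to be making this reduction watertight in the cross-platform setting, where several primaries independently sign \one messages within their respective platforms: one must show that the common digest $d$ binds all of these proposals to the \emph{same} underlying transaction, so that agreement on $h$ across platforms cannot be exploited to smuggle in a different, unauthorized $m$ on some platform.
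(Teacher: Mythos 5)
Your proposal is correct and follows essentially the same route as the paper's own proof, which likewise reduces validity to the unforgeability of signatures and collision-resistant digests carried in every message (plus the trivial observation that crash-only nodes never fabricate messages), concluding that a committed request must have been proposed earlier. You simply unfold the {\sf accepted}/{\sf committed} predicate chain and the client-signature check in more detail than the paper's brief sketch, including the cross-platform digest-binding point that the paper compresses into ``either the request or its digest is included in each message.''
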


\ifextend
In the presence of crash-only nodes, validity is ensured
since crash-only nodes do not send fictitious messages.
In the presence of Byzantine nodes, however, validity is guaranteed mainly based on
standard cryptographic assumptions about collision-resistant hashes, encryption, and signatures
which the adversary cannot subvert them.
Since the request as well as all messages are signed and
either the request or its digest is included in each message
(to prevent changes and alterations to any part of the message), and
in each step $2f+1$ matching messages (from each Byzantine platform) are required,
if a request is committed, the same request must have been proposed earlier.
\fi

\begin{lemma}\label{lmm:cons}(\textit{Consistency})
Let $P_\mu$ denote the set of involved platforms for a request $\mu$.
For any two committed requests $m$ and $m'$ and any two nodes $r_1$ and $r_2$
such that $r_1 \in p_i$, $r_2 \in p_j$, and $\{p_i,p_j\} \in P_m \cap P_{m'}$,
if $m$ is committed before $m'$ in $r_1$, then $m$ is committed before $m'$ in $r_2$.
\end{lemma}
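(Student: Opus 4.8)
The plan is to prove the contrapositive-free statement directly by exploiting the ordering machinery already established for the \textbf{agreement} property (Lemma~\ref{lmm:agree}), together with the explicit deadlock-resolution rule that forces a common global order on conflicting cross-platform transactions. The key observation is that $m$ and $m'$ are \emph{conflicting} requests in the sense used in the correctness arguments: by hypothesis their involved-platform sets intersect in at least $\{p_i, p_j\}$, so both $p_i$ and $p_j$ appear in $P_m \cap P_{m'}$. I would first fix such a pair $m, m'$ and two committed instances on nodes $r_1 \in p_i$ and $r_2 \in p_j$, and recall that committing a transaction requires the \textbf{committed} predicate, hence \two messages from a local majority of \emph{every} involved platform --- in particular from both $p_i$ and $p_j$.

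The core of the argument is to show that every involved platform in $P_m \cap P_{m'}$ locally commits $m$ and $m'$ in the same relative order, from which the cross-platform agreement forces that order to propagate. First I would invoke the guard in Algorithm~\ref{alg:cross} (line~11) and Algorithm~\ref{alg:global} (line~9): when the primary of a platform $p_k \in P_m \cap P_{m'}$ receives the \zero message for the second of two conflicting transactions while still awaiting the \three message of the first, it stalls the second until the first commits. Thus, on a single platform that processes both requests, the local commit order is serialized. Next I would appeal to the deadlock-resolution technique described above (the \twof mechanism), which guarantees that \emph{all} platforms reach the \emph{same} unique order between two conflicting requests --- ordered either by sequence number (same initiator) or by initiator-platform id (different initiators) --- and undo premature \two messages accordingly. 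Consequently, the agreed sequence numbers $h_i^{(m)}, h_i^{(m')}$ assigned within $p_i$ and the sequence numbers $h_j^{(m)}, h_j^{(m')}$ assigned within $p_j$ are consistent: if $m$ precedes $m'$ in the canonical order then $h_i^{(m)} < h_i^{(m')}$ and $h_j^{(m)} < h_j^{(m')}$.

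Having pinned down a single canonical order, I would close the argument using the append rule at the end of cross-platform consensus: a node appends a committed block only after all transactions with lower sequence numbers (on that platform) have been appended. Hence on $p_i$ node $r_1$ commits in increasing $h_i$ order and on $p_j$ node $r_2$ commits in increasing $h_j$ order; since both orderings agree with the canonical order, $m$ is committed before $m'$ in $r_1$ if and only if it is committed before $m'$ in $r_2$. I expect the main obstacle to be rigorously ruling out the case where the two platforms momentarily disagree before deadlock resolution kicks in --- i.e., showing that no node \emph{commits} in a conflicting order, only possibly sends premature \two messages that are later revoked by a \twof message. The careful point is that the \textbf{committed} predicate requires matching \three messages from a local majority of \emph{every} involved platform, so a transaction whose \two messages were undone on even one overlapping platform can never reach the \three threshold there; therefore no commit can occur out of canonical order, and the momentary disagreement never becomes a committed disagreement.
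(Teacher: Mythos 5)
Your proposal is correct and follows essentially the same route as the paper's own argument: the guard in Algorithm~\ref{alg:cross} (line~11) and Algorithm~\ref{alg:global} (line~9) serializes conflicting requests locally, the requirement of \two/\three messages from a local majority of \emph{every} involved platform prevents any commit in a divergent order, and the \twof deadlock-resolution mechanism ensures a unique canonical order --- all three ingredients the paper invokes. Your write-up is merely a more detailed elaboration (explicit sequence-number bookkeeping, the append rule, and the careful point that revoked \two messages can never reach the \three threshold) of the paper's terser sketch.
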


As mentioned in both cross-platform and global consensuses,
once a node $r_1$ of some platform $p_i$ receives a \one message for some transaction $m$,
if the node is involved in some other uncommitted transaction $m'$
where $m$ and $m'$ overlap,
node $r_1$ does not send an \two message for transaction $m$ before $m'$ gets committed.
In this way, since committing request $m$ requires \two messages from a local majority of {\em every} (involved) platform,
$m$ cannot be committed until $m'$ is committed.
As a result the order of committing messages is the same in all involved platforms.
Note that ensuring consistency might result in deadlock situations which can be resolved
as explained earlier.

\begin{lemma}\label{lmm:ter}(\textit{Termination})
A request $m$ issued by a correct client eventually completes.
\end{lemma}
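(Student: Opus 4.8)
The plan is to prove termination under the partial-synchrony assumption already invoked in the text, since, by the FLP impossibility result \cite{fischer1985impossibility} cited above, no purely asynchronous protocol can be simultaneously safe and live. I would therefore assume that after some global stabilization time the network delivers messages within a known bound, and that the timers $\tau$ are set conservatively relative to that bound. The argument then splits according to whether the primary driving $m$ is correct or faulty, with the concurrent-conflict (deadlock) case treated separately.

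First I would dispatch the correct-primary case. When the initiator primary $\pi(p_i)$ is correct and the network is synchronous, it multicasts the \zero and \one messages for $m$ (Algorithm~\ref{alg:cross} or \ref{alg:global}), and every correct node that receives them answers with \two and later \three messages. The key quantitative point is quorum availability: in a crash platform with $2f{+}1$ nodes at most $f$ are down, leaving $f{+}1$ correct nodes, exactly a local majority; in a Byzantine platform with $3f{+}1$ nodes at least $2f{+}1$ are correct, again a local majority. Hence each involved platform (cross-platform consensus), and each of the two-thirds honest platforms in the global case (where at most $\lfloor(|P|{-}1)/3\rfloor$ platforms are faulty), can contribute the matching \two and \three messages required to make the {\sf accepted} and {\sf committed} predicates true, so $m$ is appended and completes.

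Next I would handle a faulty primary. A faulty (crashed or equivocating) primary can stall $m$, but then the timer started on receipt of the \one message expires, or nodes detect conflicting sequence numbers (a \two message whose number differs from the $h_i$ they hold, possibly signalled by a \twoq message). Either event triggers the primary-failure-handling routine: a correct node multicasts a \vchange, and on collecting $2f$ of them the next round-robin primary issues a \newv carrying the uncommitted requests. Because at most $f$ nodes per platform are faulty and the primary rotates deterministically, after at most $f{+}1$ rotations a correct primary takes over; during a synchronous interval this reduces to the correct-primary case and $m$ completes.

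The hard part will be ruling out a permanent block from concurrent conflicting cross-platform requests. The consistency rule (line 11 of Algorithm~\ref{alg:cross}) makes a node withhold its \two message for $m$ until an overlapping $m'$ commits, which can create a cyclic wait where neither request assembles a local majority. I would argue this is resolved, not merely detected: the deadlock-resolution procedure imposes a total order on the conflicting requests (by their sequence numbers when they share an initiator, otherwise by initiator-platform id), and any node that already sent a \two for the now-lower-priority request retracts it via a \twof message carrying a fresh sequence number. Since this order is deterministic and agreed by all involved platforms, exactly one request, the minimal one, can gather its local majorities and commit; once it does, the withheld \two messages for the next request are released, and by induction on the finite set of mutually conflicting requests every one of them, including $m$, eventually commits. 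The remaining obligation, which I would discharge by reusing the SharPer liveness mechanism \cite{amiri2019sharper}, is that this retraction cannot livelock: the tie-break is on static identifiers rather than on arrival order, so no two correct platforms ever retract each other's messages in perpetuity. Combining the three cases then yields termination of $m$.
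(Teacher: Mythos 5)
Your proposal is correct and takes essentially the same route as the paper, whose proof sketch likewise restricts liveness to periods of synchrony and splits into the same three cases: a non-faulty primary with non-conflicting \zero messages (normal-case operation), a faulty primary (handled by the timer-triggered \vchange/\newv routine), and conflicting \zero messages (handled by the deadlock-resolution order with \twof retractions, deferring to SharPer for details). Your write-up merely makes explicit the quorum-availability counts, the round-robin rotation bound, and the induction over conflicting requests that the paper's brief sketch leaves implicit.
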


\ifextend
\system deals with termination (liveness) only during periods of synchrony using timers.
To do so, three scenarios need to be addressed.
If the primary is non-faulty and \zero messages are non-conflicting,
following the normal case operation of the protocol, request $m$ completes.
If the primary is non-faulty, but \zero (and as a result \two) messages are conflicting,
\system resolves the deadlock.
Finally, \system includes a routine to handle primary failures.
\fi
\section{Experimental Evaluations}\label{sec:eval}

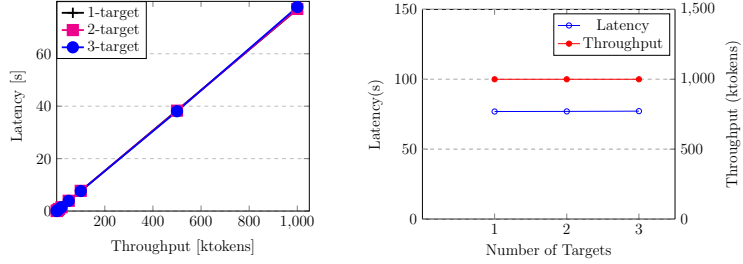
\begin{figure}[t]
\large
\centering
\begin{minipage}{.23\textwidth}\centering
\begin{tikzpicture}[scale=0.49]
\begin{axis}[
    xlabel={Throughput [ktokens]},
    ylabel={Latency [s]},
    xmin=0, xmax=1050,
    ymin=0, ymax=80,
    xtick={200,400,600,800,1000},
    ytick={0,20,40,60},
    legend columns=1,
    legend style={at={(axis cs:0,80)},anchor=north west}, 
    ymajorgrids=true,
    grid style=dashed,
]

 \addplot[
    color=black,
    mark=+,
    mark size=4pt,
    line width=0.5mm,
    ]
    coordinates {
    (0.1,0.015)(1,0.091)(5,0.40)(10,0.78)(20,1.551)(50,3.871)(100,7.626)(500,38.421)(1000,76.952)};

\addplot[
    color=magenta,
    mark= square*,
    mark size=4pt,
    line width=0.5mm,
    ]
    coordinates {
    (0.1,0.015)(1,0.089)(5,0.401)(10,0.782)(20,1.463)(50,3.913)(100,7.705)(500,38.292)(1000,77.103)};

\addplot[
    color=blue,
    mark=*,
    mark size=4pt,
    line width=0.5mm,
    ]
    coordinates {
    (0.1,0.016)(1,0.089)(5,0.408)(10,0.789)(20,1.463)(50,3.901)(100,7.641)(500,38.082)(1000,77.845)};

\addlegendentry{$1$-target}
\addlegendentry{$2$-target}
\addlegendentry{$3$-target}
 
\end{axis}
\end{tikzpicture}
\end{minipage}\hspace{2em}
\begin{minipage}{.23\textwidth}\centering
\begin{tikzpicture}[scale=0.49]
\pgfplotsset{
    xmin=0, xmax=3.5,
    xtick={1,2,3},
        ymajorgrids=true,
    grid style=dashed,
}

\begin{axis}[
  axis y line*=left,
  ymin=0, ymax=150,
  xlabel=Number of Targets,
  ylabel=Latency(s)
]

\addplot[smooth,mark=o,blue]
  coordinates{(1,76.931)(2,77.019)(3,77.212)};
\label{plot_two}

\end{axis}

\begin{axis}[
  axis y line*=right,
  axis x line=none,
  ymin=0, ymax=1500,
  ylabel=Throughput (ktokens)
]
\addlegendimage{/pgfplots/refstyle=plot_two}
\addlegendentry{Latency}
\addplot[smooth,mark=*,red]
coordinates{(1,1000)(2,1000)(3,1000)};
\addlegendentry{Throughput}
\end{axis}
\end{tikzpicture}
\end{minipage}
\caption{Varying the Number of Tokens}
  \label{fig:token}
\end{figure}

In this section, we conduct several experiments to evaluate both the scalability of \system and the overhead of privacy.
We consider a complex heavily loaded setting with $4$ platforms (except for the last experiment),
$20000$ requesters and $20000$ workers where
both requesters and workers are randomly registered to one or more platforms.
Once a task is submitted to a platform, the platform randomly assigns the task to
one or more (depending on the required number of contributions) idle workers (to avoid any delay).
The experiments consist of two main parts.
In the first part (Sections~\ref{sec:token} and \ref{sec:expconst}),
the privacy costs of \system (i.e., tokens and regulations) is evaluated, whereas
in the second part (Sections~\ref{sec:expcross} and \ref{sec:expplatform}),
the scalability of \system is evaluated.
For the purpose of this evaluation, and as explained earlier,
we do not focus on the description of tasks and contributions
(both are modeled as arbitrary bitstrings).
In addition, $v$-tokens, as explained earlier, are very similar to 
$e$-tokens except for the private part that has no significant impact on the performance and the
number of interaction phases which is even less than  $e$-tokens. %
Therefore, we only focus on $e$-tokens (i.e., enforceable regulations) in the experiments.
To implement group signatures,
we use the protocol proposed in~\cite{camenisch2004group}.
The experiments were conducted on the Amazon EC2 platform.
Each VM is a c4.2xlarge instance with 8 vCPUs and 15GB RAM,
Intel Xeon E5-2666 v3 processor clocked at 3.50 GHz.
When reporting throughput measurements, we use an increasing
number of tasks submitted by requesters running on a single VM,
until the end-to-end throughput is saturated,
and state the throughput and latency just below saturation.

\subsection{Token Generation}
\label{sec:token}

In the first set of experiments, we measure the performance of token generation (performed by RA)
in \system for different classes of regulations.
We consider different classes of regulations, i.e.,
single-target (e.g., $((w, *, *), <, \theta)$),
two-target (e.g., $((*, p, r), <, \theta)$), and
three-target (e.g., $((w, p, r), <, \theta)$) enforceable regulations.
As shown in Figure~\ref{fig:token}(a),
Our experiments show that
\system is able to generate tokens in linear time.
\system generates each token in $~0.07 ms$, hence, generating
$1$ million tokens in $\sim 76$ seconds.
This clearly demonstrates the scalability of token generation especially since
token generation is executed periodically, e.g., every week or every month.
Note that, we use a single machine to generate tokens, however,
tokens related to different regulations can be generated in parallel.
Hence, the throughput of \system can linearly increase by
running the token generation routine on multiple machines, e.g.,
with $10$ machines, \system is able to generate $1$ million tokens in $\sim 8$ seconds.
Moreover, to provide fault tolerance, the tokens can be replicated on multiple machines
following the replication techniques.
Furthermore,
as can be seen in Figure~\ref{fig:token}(b),
classes of regulations (i.e., the number of targets)
does not affect the performance, i.e., the throughput and latency of token generation
are constant in terms of the number of targets.
It should, however, be noted that a regulations with more targets
requires more tokens to be generated, e.g.,
$((w, *, *), <, \theta)$ requires $|\mathcal{W}| * \theta$ tokens, whereas
$((w, p, r), <, \theta)$ requires $|\mathcal{W}| * |\mathcal{P}| * |\mathcal{R}| * \theta$ tokens to be generated.

\subsection{Classes of Regulations}
\label{sec:expconst}

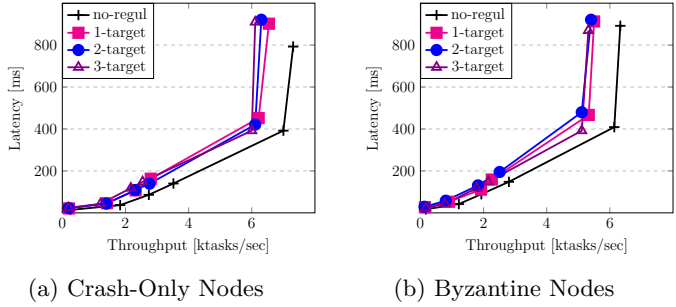
\begin{figure}[t]
\large
\centering
\begin{minipage}{.23\textwidth}\centering
\begin{tikzpicture}[scale=0.49]
\begin{axis}[
    xlabel={Throughput [ktasks/sec]},
    ylabel={Latency [ms]},
    xmin=0, xmax=8,
    ymin=0, ymax=1000,
    xtick={0,2,4,6},
    ytick={200,400,600,800},
    legend columns=1,
    legend style={at={(axis cs:0,1000)},anchor=north west}, 
    ymajorgrids=true,
    grid style=dashed,
]

 \addplot[
    color=black,
    mark=+,
    mark size=4pt,
    line width=0.5mm,
    ]
    coordinates {
    (0.258,14)(1.832,37)(2.74,86)(3.519,140)(7.001,392)(7.311,793)};

\addplot[
    color=magenta,
    mark= square*,
    mark size=4pt,
    line width=0.5mm,
    ]
    coordinates {
    (0.208,21)(1.413,46)(2.32,107)(2.801,163)(6.219,452)(6.542,902)};

\addplot[
    color=blue,
    mark=*,
    mark size=4pt,
    line width=0.5mm,
    ]
    coordinates {
    (0.202,23)(1.392,44)(2.31,109)(2.763,139)(6.109,421)(6.302,921)};

\addplot[
    color=violet,
    mark=triangle,
    mark size=4pt,
    line width=0.5mm,
    ]
    coordinates {
    (0.192,24)(1.208,43)(2.17,118)(2.542,148)(6.001,392)(6.110,910)};

\addlegendentry{no-regul}
\addlegendentry{1-target}
\addlegendentry{2-target}
\addlegendentry{3-target}
 
\end{axis}
\end{tikzpicture}
{\footnotesize (a) Crash-Only Nodes}
\end{minipage}\hspace{2em}
\begin{minipage}{.23\textwidth} \centering
\begin{tikzpicture}[scale=0.49]
\begin{axis}[
    xlabel={Throughput [ktasks/sec]},
    ylabel={Latency [ms]},
    xmin=0, xmax=8,
    ymin=0, ymax=1000,
    xtick={0,2,4,6},
    ytick={200,400,600,800},
    legend columns=1,
    legend style={at={(axis cs:0,1000)},anchor=north west}, 
    ymajorgrids=true,
    grid style=dashed,
]

 \addplot[
    color=black,
    mark=+,
    mark size=4pt,
    line width=0.5mm,
    ]
    coordinates {
    (0.172,17)(1.216,42)(1.93,89)(2.802,148)(6.14,409)(6.33,892)};

\addplot[
    color=magenta,
    mark= square*,
    mark size=4pt,
    line width=0.5mm,
    ]
    coordinates {
    (0.142,27)(0.902,53)(1.92,111)(2.261,159)(5.331,467)(5.501,913)};

\addplot[
    color=blue,
    mark=*,
    mark size=4pt,
    line width=0.5mm,
    ]
    coordinates {
    (0.132,29)(0.803,58)(1.823,131)(2.510,195)(5.109,480)(5.403,921)};

\addplot[
    color=violet,
    mark=triangle,
    mark size=4pt,
    line width=0.5mm,
    ]
    coordinates {
    (0.173,24)(0.780,43)(1.78,118)(2.151,148)(5.110,392)(5.319,870)};

\addlegendentry{no-regul}
\addlegendentry{1-target}
\addlegendentry{2-target}
\addlegendentry{3-target}

\end{axis}
\end{tikzpicture}
{\footnotesize (b) Byzantine Nodes}
\end{minipage}
\caption{Varying the Class of regulations}
\label{fig:const}
\end{figure}

In the second set of experiments, we measure
the overhead of privacy-preserving techniques, e.g., group signatures and tokens, used in \system.
We consider the basic scenario with no regulation
(i.e., no need to exchange and validate tokens and signatures)
and compare it with
three different scenarios where each task has to satisfy
a single target,
a two-target, and
a three-target regulation.
The system consists of four platforms and the workload includes
$90\%$ internal and $10\%$ cross-platform tasks
(the typical settings in partitioned databases \cite{thomson2012calvin, taft2014store}) where
two (randomly chosen) platforms are involved in \sub and \clm transactions of cross-platform tasks
(note that all platforms are still involved in the \ver transaction of each task).
We also assume that completion of each task requires a single contribution
(and obviously a \sub and a \ver transaction).

When nodes follow the crash failure model and the system has no regulations,
as can be seen in Figure~\ref{fig:const}(a), \system is able to process $7000$ tasks with $390$ ms latency
before the end-to-end throughput is saturated (the penultimate point).
Adding regulations, results in more phases of communication between different participants
to exchange tokens and signatures, however,
\system is still able to process $6200$ tasks (the penultimate point) with $450$ ms latency.
This result demonstrates that all privacy-preserving techniques that are used in \system
results in only $11\%$ and $15\%$ overhead in terms of the throughput and latency respectively.
Moreover, the class of regulations does not significantly affect the performance of \system.
This is expected because more targets result in only
increasing the number of (parallel) tokens and signature exchanges
while the number of communication phases is not affected.

Similarly, in the presence of Byzantine nodes and as shown in Figure~\ref{fig:const}(b),
\system is able to process $6140$ tasks with $409$ ms latency with no regulations and
$5331$ tasks ($13\%$ overhead)  with $467$ ms ($14\%$ overhead) latency with single-target regulations.
As before, the class of regulations does not affect the performance.

It should be noted that by increasing the number of regulations, \system
still demonstrates similar behavior as shown in this experiment.
Indeed, adding more regulations,
while it results in adding more tokens and possibly more participants and signatures,
it does not affect the consensus protocols and other communication phases.

\subsection{Cross-Platform Tasks}
\label{sec:expcross}

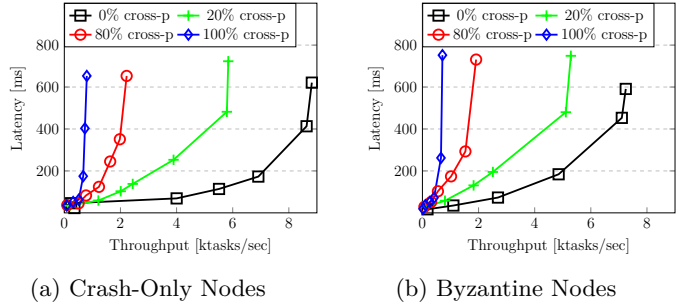
\begin{figure}[t]
\large
\centering
\begin{minipage}{.23\textwidth}\centering
\begin{tikzpicture}[scale=0.49]
\begin{axis}[
    xlabel={Throughput [ktasks/sec]},
    ylabel={Latency [ms]},
    xmin=0, xmax=9,
    ymin=0, ymax=1000,
    xtick={0,2,4,6,8},
    ytick={200,400,600,800},
    legend columns=2,
    legend style={at={(axis cs:0,1000)},anchor=north west}, 
    ymajorgrids=true,
    grid style=dashed,
]

 \addplot[
    color=black,
    mark=square,
    mark size=4pt,
    line width=0.5mm,
    ]
    coordinates {
    (0.359,23)(0.2432,46)(4.002,69)(5.512,114)(6.913,173)(8.632,413)(8.819,621)};

\addplot[
    color=green,
    mark=+,
    mark size=4pt,
    line width=0.5mm,
    ]
    coordinates {
    (0.195,34)(1.213,59)(2.01,103)(2.437,138)(3.892,253)(5.792,482)(5.842,723)};

 \addplot[
     color=red,
     mark=o,
     mark size=4pt,
     line width=0.5mm,
     ]
     coordinates {
    (0.105,37)(0.509,45)(0.773,83)(1.234,124)(1.634,245)(1.978,351)(2.221,653)};

\addplot[
    color=blue,
    mark=diamond,
    mark size=4pt,
    line width=0.5mm,
    ]
    coordinates {
    (0.087,35)(0.315,52)(0.514,63)(0.673,175)(0.734,403)(0.801,653)};

\addlegendentry{$0\%$ cross-p}
\addlegendentry{$20\%$ cross-p}
\addlegendentry{$80\%$ cross-p}
\addlegendentry{$100\%$ cross-p}
 
\end{axis}
\end{tikzpicture}
{\footnotesize (a) Crash-Only Nodes}
\end{minipage}\hspace{2em}
\begin{minipage}{.23\textwidth} \centering
\begin{tikzpicture}[scale=0.49]
\begin{axis}[
    xlabel={Throughput [ktasks/sec]},
    ylabel={Latency [ms]},
    xmin=0, xmax=9,
    ymin=0, ymax=1000,
    xtick={0,2,4,6,8},
    ytick={200,400,600,800},
    legend columns=2,
    legend style={at={(axis cs:0,1000)},anchor=north west},
    ymajorgrids=true,
    grid style=dashed,
]

 \addplot[
    color=black,
    mark=square,
    mark size=4pt,
    line width=0.5mm,
    ]
    coordinates {
    (0.180,17)(1.103,35)(2.692,73)(4.853,184)(7.104,453)(7.245,591)};

\addplot[
    color=green,
    mark=+,
    mark size=4pt,
    line width=0.5mm,
    ]
    coordinates {
    (0.132,29)(0.803,58)(1.823,131)(2.510,195)(5.109,480)(5.289,749)};

\addplot[
    color=red,
    mark=o,
    mark size=4pt,
    line width=0.5mm,
    ]
    coordinates {
    (0.071,29)(0.313,52)(0.552,103)(1.019,174)(1.531,293)(1.913,731)};

\addplot[
    color=blue,
    mark=diamond,
    mark size=4pt,
    line width=0.5mm,
    ]
    coordinates {
    (0.021,19)(0.105,37)(0.27,53)(0.413,72)(0.661,262)(0.713,752)};

\addlegendentry{$0\%$ cross-p}
\addlegendentry{$20\%$ cross-p}
\addlegendentry{$80\%$ cross-p}
\addlegendentry{$100\%$ cross-p}

\end{axis}
\end{tikzpicture}
{\footnotesize (b) Byzantine Nodes}
\end{minipage}
\caption{Varying Number of Cross-Platform Tasks}
\label{fig:crossexp}
\end{figure}

In the next set of experiments, we measure the performance of \system for
workloads with different percentages of cross-platform tasks.
We consider four different workloads with
(1) no cross-platform tasks,
(2) $20\%$ cross-platform tasks,
(3) $80\%$ cross-platform tasks, and
(4) $100\%$ cross-platform tasks.
As before, completion of each task requires a single contribution and
two (randomly chosen) platforms are involved in each cross-platform task.
The system includes four platforms and each task has to satisfy two randomly chosen regulations.
We consider two different networks with crash-only and Byzantine nodes.
When all nodes follow crash-only nodes, as presented in Figure~\ref{fig:crossexp}(a),
\system is able to process $8600$ tasks with $400$ ms latency 
before the end-to-end throughput is saturated (the penultimate point)
if all tasks are internal.
Note that even when all tasks are internal,
the \ver transaction of each task still needs global consensus among all platforms.
Increasing the percentage of cross-platform tasks to $20\%$,
reduces the overall throughput to $5800$ ($67\%$) with $400$ ms latency since
processing cross-platform tasks requires cross-platform consensus.
By increasing the percentage of cross-platform tasks to $80\%$ and then $100\%$,
the throughput of \system will reduce to $1900$ and $700$ with the same ($400$ ms) latency.
This is expected because when most tasks are cross-platform ones,
more nodes are involved in processing a task and more messages are exchanged.
In addition, the possibility of parallel processing of tasks will be significantly reduced.
In the presence of Byzantine nodes, as shown in Figure~\ref{fig:crossexp}(b),
\system demonstrates a similar behavior as the crash-only case.
When all tasks are internal, \system processes $7100$ tasks with $450$ ms latency.
Increasing the percentage of cross-platform tasks to $20\%$ and $80\%$ will
reduce the throughput to $4900$ and $1700$ tasks with the same ($450$ ms) latency respectively.
Finally, when all tasks are cross-platform, \system is able to process $700$ tasks with $450$ ms latency.

\subsection{The Number of Platforms}
\label{sec:expplatform}

In the last set of experiments, we measure the scalability of \system in crowdworking environments
with different number of platforms ($1$ to $5$ platforms) for
both crash-only and Byzantine nodes (assuming $f=1$ in each platform).
The networks, thus, include
$3$, $6$, $9$, $12$, and $15$ crash-only nodes or $4$, $8$, $12$, $16$ and $20$ Byzantine nodes.
Each task has to satisfy two randomly chosen regulations,
two (randomly chosen) platforms are involved in each cross-platform tasks,
completion of each task requires a single contribution,
and the workloads include $90\%$ internal and $10\%$ cross-platform tasks.
Note that in the scenario with a single platform, all tasks are internal.
As shown in Figure~\ref{fig:platform}(a), in the presence of crash-only nodes,
the performance of the system is improved by adding more platforms, e.g.,
with five platform, \system processes $6600$ tasks with $400$ ms latency whereas
in a single platform setting, \system processes $3300$ task with the same latency.
While adding more platforms improves the performance of \system, the relation between
the increased number of platforms and the improved throughput is non-linear
(the number of platforms has been increased $5$ times while the throughput doubled).
This is expected because
adding more platforms while increases the possibility of parallel processing of internal tasks,
makes the global consensus algorithm (which is needed for every single task) more expensive.
In the presence of Byzantine nodes, \system demonstrates similar behavior, e.g.,
processes $5500$ tasks with $470$ ms latency with $5$ platforms.

\begin{figure}[t]
\large
\centering
\begin{minipage}{.23\textwidth}\centering
\begin{tikzpicture}[scale=0.49]
\begin{axis}[
    xlabel={Throughput [ktasks/sec]},
    ylabel={Latency [ms]},
    xmin=0, xmax=7,
    ymin=0, ymax=700,
    xtick={0,1,2,3,4,5,6},
    ytick={150,300,450,600},
    legend columns=1,
    legend style={at={(axis cs:0,700)},anchor=north west}, 
    ymajorgrids=true,
    grid style=dashed,
]

 \addplot[
    color=black,
    mark=+,
    mark size=4pt,
    line width=0.5mm,
    ]
    coordinates {
    (0.111,20)(0.719,41)(1.209,63)(1.612,98)(2.921,282)(3.331,392)(3.416,471)};

\addplot[
    color=magenta,
    mark= square*,
    mark size=4pt,
    line width=0.5mm,
    ]
    coordinates {
    (0.151,22)(0.901,42)(1.613,77)(2.211,119)(3.205,221)(4.632,370)(4.819,442)};

\addplot[
    color=blue,
    mark=*,
    mark size=4pt,
    line width=0.5mm,
    ]
    coordinates {

    (0.183,23)(1.161,44)(2.001,84)(2.381,103)(3.902,223)(5.421,396)(5.611,511)};

\addplot[
    color=violet,
    mark=triangle,
    mark size=4pt,
    line width=0.5mm,
    ]
    coordinates {
    (0.217,25)(1.412,47)(2.309,77)(2.771,101)(4.621,239)(6.063,409)(6.101,536)};
    
\addplot[
    color=green,
    mark=o,
    mark size=4pt,
    line width=0.5mm,
    ]
    coordinates {
    (0.249,29)(1.721,53)(2.693,76)(3.701,112)(4.910,218)(6.621,432)(6.713,552)};

\addlegendentry{$3$ nodes}
\addlegendentry{$6$ nodes}
\addlegendentry{$9$ nodes}
\addlegendentry{$12$ nodes}
\addlegendentry{$15$ nodes}

\end{axis}
\end{tikzpicture}
{\footnotesize (a) Crash-Only Nodes}
\end{minipage}\hspace{2em}
\begin{minipage}{.23\textwidth} \centering
\begin{tikzpicture}[scale=0.49]
\begin{axis}[
    xlabel={Throughput [ktasks/sec]},
    ylabel={Latency [ms]},
    xmin=0, xmax=7,
    ymin=0, ymax=700,
    xtick={0,1,2,3,4,5,6},
    ytick={150,300,450,600},
   legend columns=1,
   legend style={at={(axis cs:0,700)},anchor=north west}, 
    ymajorgrids=true,
    grid style=dashed,
]

 \addplot[
    color=black,
    mark=+,
    mark size=4pt,
    line width=0.5mm,
    ]
    coordinates {
    (0.061,15)(0.392,32)(0.92,67)(1.319,103)(2.401,312)(2.623,479)};

\addplot[
    color=magenta,
    mark=square*,
    mark size=4pt,
    line width=0.5mm,
    ]
    coordinates {
    (0.085,21)(0.614,41)(1.208,78)(1.813,144)(2.221,203)(2.911,326)(3.302,554)};

\addplot[
    color=blue,
    mark=*,
    mark size=4pt,
    line width=0.5mm,
    ]
    coordinates {
    (0.109,36)(0.703,61)(1.529,101)(2.17,156)(2.842,231)(3.982,402)(4.490,623)};

\addplot[
    color=violet,
    mark=triangle,
    mark size=4pt,
    line width=0.5mm,
    ]
    coordinates {
    (0.128,31)(0.811,55)(1.782,102)(2.591,162)(3.742,278)(5.001,463)(5.11,631)};

\addplot[
    color=green,
    mark=o,
    mark size=4pt,
    line width=0.5mm,
    ]
    coordinates {
    (0.119,29)(1.042,44)(2.202,87)(3.001,140)(4.105,260)(5.519,471)(5.593,620)};

\addlegendentry{$4$ nodes}
\addlegendentry{$8$ nodes}
\addlegendentry{$12$ nodes}
\addlegendentry{$16$ nodes}
\addlegendentry{$20$ nodes}

\end{axis}
\end{tikzpicture}
{\footnotesize (b) Byzantine Nodes}
\end{minipage}
\caption{Varying the Number of Platforms}
  \label{fig:platform}
\end{figure}
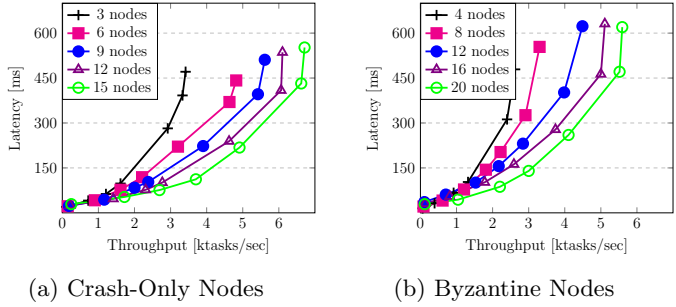

\section{Related Work}\label{sec:related}

Enhancing privacy in the context of crowdworking has been addressed by several recent studies
with various kinds of guarantees, from differential privacy ~\cite{to2018privacy,to2016differentially} to
cryptography \cite{liu2017protecting,liu2017privacy,liu2018efficient},
mostly focusing on spatial crowdsourcing and the use of geolocation to perform assignments. 
In ZebraLancer~\cite{lu2018zebralancer} and ZKCrowd~\cite{zhu2019zkcrowd}, blockchain
is also used to add transparency guarantees on top of privacy. 
However, all these works consider a single-platform context. 
Other works propose to manage multiple platforms in crowdsourcing. For instance, Fluid~\cite{han2019fluid} proposes to share workers' profiles between multiple platforms, and Wang et al. \cite{wang2019interaction} provides an interesting insight on federated recommendation systems for workers and the balance of surplus of tasks or workers in a cross-platform context. 
However, none of them provides tools to manage external regulations: to the best of our knowledge, \system is the first to support a multi-platform crowdworking context,
with external regulations, transparency, and privacy expectations at the same time. 

In the context of permissioned blockchains, 
Hyperledger Fabric \cite{androulaki2018hyperledger} ensures data confidentiality
using Private Data Collections \cite{collections}.
Private Data Collections manage confidential data that two or more entities
want to keep private from others.
Quorum \cite{morgan2016quorum}
supports public and private transactions and ensures
the confidentiality of private transactions using the Zero-knowledge proof technique.
Quorum, however, orders all public and as private transactions using
a single consensus protocol resulting in low throughput.

\system is inspired by various permissioned blockchain systems,
and more specifically by SharPer \cite{amiri2019sharper},
a permissioned blockchain system that uses sharding to improve scalability.
Unlike \system, SharPer is designed for environments with a single application enterprise, 
which would correspond to the ledger of a single platform in \system.
\ifextend
Furthermore, SharPer supports simple transactions that are all processed
in the same way, while \system has different types (i.e., \sub, \clm, and \ver)
and depending on its type, the transactions will be processed in different ways,
e.g., \ver transactions require coordination among all platforms.
\fi
In SharPer, the infrastructure consists of clusters of nodes where
the single ledger is partitioned into shards that are assigned to different clusters.
\ifextend
Within each cluster, each data shard is replicated on the nodes of the cluster.
A permissioned blockchain is used to incorporate all transactions making any changes to the data shards.
SharPer supports both intra-shard and cross-shard transactions and
incorporates a flattened protocol to establish consensus 
among clusters on the order of cross-shard transactions.
\fi
\system, on the other hand, includes multiple enterprises (i.e., platforms)
that might not trust each other.
As a result, while each shard in SharPer is similar to a platform in \system,
any cross-platform communication in \system requires a Byzantine fault-tolerant protocol
that tolerates malicious behavior at the platform level. 
This is in contrast to SharPer where the type of cross-platform consensus
depends on the failure model of nodes (and not the application, which is always trusted).
\system needs to support, in addition to local and cross-platform consensus, a global consensus protocol
that establishes agreement among all (possibly untrusted) platforms on the order of \ver transactions.
Finally, while in SharPer all nodes are either crash-only or Byzantine,
in \system, platforms are run by different enterprises and hence can have different failure models.

Providing anonymity as well as untraceability has been addressed by ZCash~\cite{zcash}
which is restricted to the management of crypto-currency issues.
Hawk~\cite{hawk} and Raziel~\cite{raziel} manage wider issues, and include general smart contracts.
However, these solutions do not incorporate infrastructures with multiple platforms,
nor implement regulations (let alone anonymized ones). 
Finally, Solidus~\cite{solidus} proposes to privately manage a multi-platform banking system,
with individual banks managing their own clients while allowing cross-platform transactions.
While Solidus may be sufficient for banking systems,
it does not consider users that subscribe to multiple platforms, nor envisions global profiles or regulations.
\section{Conclusion}\label{sec:conc}

In this paper, we present an overall vision
for future of work multi-platform crowdworking environments consisting of three main dimensions:
regulations, security and architecture.
We then introduce \system, 
(to the best of our knowledge)
the first to address the problem of enforcing global regulations
over multi-platform crowdworking environments.
\system enables official institutions to express global regulations in simple and unambiguous terms,
guarantees the satisfaction of global regulations by construction, and
allows participants to prove to external entities their involvement in crowdworking tasks,
all in a privacy-preserving manner.
\system also uses transparent blockchain ledgers shared across multiple platforms and
enables collaboration among platforms through a suite of distributed consensus protocols.
We prove the privacy requirements of \system and
conduct an extensive experimental evaluation to measure the performance and scalability of \system.
\balance

\bibliographystyle{abbrv}
\bibliography{main}

\end{document}